\newtheorem{thm}{Theorem}
\newtheorem{lemma}{Lemma}[section]
\newtheorem{rmk}[lemma]{Remark}
\newtheorem{cor}[lemma]{Corollary}
\newtheorem*{bel*}{Belief}
\numberwithin{equation}{section}
\newtheoremstyle{named}{}{}{\itshape}{}{\bfseries}{.}{.5em}{\thmnote{#3}}
\theoremstyle{named}
\newtheorem*{thm1*}{Theorem}
\def\K{\widetilde{K}}
\def\E{\widetilde{E}}
\def\I{\widetilde{I}}
\def\s{\widetilde{s}}
\def\g{\widetilde{g}}
\def\hr{{{{\it hr}}}}
\def\ind{\mathbbm{1}}
\newcounter{remnr1}
\def\res1{
    \addtocounter{remnr1}{1}
    \vspace{2mm}\noindent{\bf (\Alph{remnr1})} }
\begin{document}

\title{Semi-classical analysis of non self-adjoint
transfer matrices in statistical mechanics. I}
\author{Margherita Disertori\textsuperscript{1} and Sasha Sodin\textsuperscript{2}}
\footnotetext[1]{Institute for Applied Mathematics \& Hausdorff Center for Mathematics,
University of Bonn, Endenicher Allee 60, 53115 Bonn, Germany. \\
E-mail: margherita.disertori@iam.uni-bonn.de}
\footnotetext[2]{Department of Mathematics, Princeton University,
Fine Hall, Washington Road, Princeton, NJ 08544-1000, USA \& 
School of Mathematical Sciences, Tel Aviv University,
Ramat Aviv,
Tel Aviv 699780,
Israel. \\
E-mail: sashas1@post.tau.ac.il}
\maketitle

\begin{abstract}
We propose a way to study one-dimensional statistical mechanics models 
with com\-plex-valued action
using transfer operators. The argument consists of two steps. First, the contour of integration is 
deformed so that the associated transfer operator is a perturbation of
a normal one. Then the transfer operator is studied using methods of semi-classical
analysis. 

In this paper we concentrate on the second step,
the main technical result being
a semi-classical estimate for powers of an integral operator which is approximately normal.
\end{abstract}

\section{Introduction}
Operator-theoretic methods are known to be of great help in one-dimensional
statistical mechanics. 

\vspace{1mm}\noindent
 Consider the following prototypical example. Let $V:\mathbb{R} \to \mathbb{R}$ be a potential growing 
sufficiently fast at infinity.
It is known that, for any value of $W > 0$ , 
there exists a unique probability measure 
(Gibbs measure) $\mu_{V, W}$ on  the space of configurations in one dimension $\mathbb{R}^\mathbb{Z}$ 
such that, for every $M,N \in \mathbb{N}$, the conditional probability density 
at $\phi \in \mathbb{R}^\mathbb{Z}$ given $\phi|_{\{-M,\cdots,N\}^c}$
is proportional to
\begin{equation}\label{eq:model}
\exp \left\{ - \sum_{j =-M}^N V(\phi_j) - \sum_{j=-M-1}^{N} W^2 (\phi_j - \phi_{j+1})^2 \right\}~. 
\end{equation}
The existence of $\mu_{V,W}$ is a consequence of general theory, independent of the
dimension of the lattice  (see \cite[Chapter 7]{Ru}).
The uniqueness can be proved using the transfer
matrix formalism described below; it also follows for example from the van Hove theorem as stated in
the book of Ruelle \cite[Section 5.6.6]{Ru},
combined with the Dobrushin--Shlosman theorem \cite{DS}.

One says that the measure $\mu_{V, W}$ corresponds to the (real-valued) action
\begin{equation}\label{eq:Ham}
S(\phi) = \sum_j V(\phi_j) + \sum_j W^2 (\phi_j - \phi_{j+1})^2~.
\end{equation}

The properties of $\mu_{V, W}$, such as exponential decay of correlations 
(between $\phi_j$ and $\phi_k$ as $|j-k|\to \infty$), are encoded
in the spectral structure of the self-adjoint operator $K$ (called the transfer
operator, or transfer matrix), acting on $L_2(\mathbb{R})$ as an integral operator with kernel given by 
\begin{equation}\label{eq:saK}
 K(x, y) = \exp \left\{ - W^2(x-y)^2 - \frac{V(x) + V(y)}{2}\right\}~.
\end{equation}
When $W \gg 1$ is large,
semi-classical analysis allows to relate the spectral properties of $K$
to those of a simpler operator $\K$,  which depends
only on the behaviour of $V$ at its minima. If the minima of $V$ are non-degenerate, the potential
corresponding to $\K$ is quadratic, and thus $\K$ is referred
to as the harmonic approximation to $K$. In this case, computations can be performed explicitly; the small
quantity $W^{-1}$ plays the r\^ole of the semi-classical parameter $\hbar$.

In the case when $V$ has a unique non-degenerate minimum $V(0)=0$, $\K$ is given by the harmonic oscillator
\begin{equation}\label{eq:saKtil}
\K(x, y) = \exp \left\{ - W^2(x-y)^2 - \frac{V''(0)}{4} (x^2 + y^2)\right\}~.
\end{equation}
When $V$ has several minima, $K$ is approximated by a direct sum of several
harmonic oscillators.

The semi-classical approach to various problems of one-dimensional statistical
mechanics is presented in detail in the monograph of Helffer \cite{H}.

\vspace{2mm}\noindent
On the other hand, in many questions in statistical mechanics the potential $V$ is complex-valued. This problem, often referred to as the ``sign problem'' or 
``complex action problem'', is inherent, for example, to lattice quantum chromodynamics 
(see for example Muroya et al.\ \cite{Muroya} and Splittorff and Verbaarschot \cite{SV}), and
also arises in supersymmetric models appearing in the study of random operators
(see the reviews of Spencer \cite{Sp,Sp2}).

A na\"ive attempt to apply the methods taylored for real-valued action
to this situation encouters immediate obstacles. In the context of transfer
operators, neither $K$ nor $\K$ is self-adjoint;
thus perturbation theory is not easily set on a rigorous basis, and on the 
other hand the spectrum of $K$
is not directly connected to the semigroup $(K^n)_{n \geq 0}$. We refer to the articles
of Davies \cite{D1,D2}  and further to the monograph of 
Helffer \cite[Chapters 13--15]{H2} and to the PhD thesis of Henry \cite{RH},
where  difficulties in the semi-classical analysis of 
non self-adjoint operators are discussed, along with some positive results.

Our goal in this paper is to suggest a strategy which allows to apply semi-classical
analysis to models with complex-valued action, in spite of these difficulties. Here we apply it to a toy model (with one
saddle); in a subsequent work, we hope to apply it to a statistical mechanics model
arising from the supersymmetric
analysis of a class of random band matrices; see \cite{DPS} for an analysis of
a related three-dimensional model, and the review of Spencer \cite{Sp} for a  discussion of 
supersymmetric
models arising from random band matrices, and the possible transfer matrix approach.

\vspace{2mm}\noindent
The strategy we suggest is as follows. Before setting up the transfer operator,
we deform the contour of integration so that the harmonic approximation $\K$
is almost normal (in appropriate sense). 
Then we set up the transfer operator and analyse it 
using (semi-) classical tools.  
In this paper, we restrict ourself to the simplest
deformation
\begin{equation}\label{eq:deform}
\phi \gets \zeta \phi, \quad |\zeta| = 1~;
\end{equation}
in general, a more complicated deformation (similar to (\ref{eq:deform}) near each
saddle point but different away from the saddles) may be required.

To motivate the idea, let us consider the differential operator
\[
L = - \frac{1}{W^2} \frac{d^2}{dx^2}  + (a+ib) x^2~, \quad a > 0.
\]
One can always find $\zeta$, $|\zeta|=1$, so that after the change of variables
$x \gets \zeta x$ the operator $L$ becomes normal, i.e.\ a scalar multiple of
\[ 
\hat{L} = -\frac{1}{W^2} \frac{d^2}{dx^2} + |a+ib| x^2~. 
\]
In a similar way, for an  integral operator $K$, one can  rotate the 
contour so that the harmonic approximation to $K$ near the saddle point  (i.e.\ an 
operator with quadratic $V$) becomes normal.

The main result of this paper justifies the approximation $K\approx \K$
for a class of operators $K$  of a form similar to (\ref{eq:saK})  by the corresponding
harmonic approximation $\tilde{K}$, in the case
when $\tilde{K}$ is  almost normal. The precise statement and conditions
are given in Section \ref{main} below.
The proof of this result occupies the central
part of this paper, and appears in Section~\ref{s:main}. It is preceded by Section~\ref{s:harm}, in which
several properties of the non-self-adjoint
harmonic oscillator are collected.

In Section~\ref{s:applic} we show an application
to a statistical mechanics model with complex-valued potential;
namely, we find the sharp exponential decay of correlations for this model.

\section{Main result}
\label{main}

\subsection{Statement of the main technical result}\label{sub:assump}

Let $K: L_2(\mathbb{R}) \to L_2(\mathbb{R})$ be an operator defined by
its kernel
\begin{equation}\label{eq:defK}
K(x, y) = \exp \left\{ - W^2 \zeta^2 (x-y)^2 - \frac{1}{2} U(x) - \frac{1}{2}U(y) \right\}~,
\end{equation}
where $\zeta \in \mathbb{C}$ is a complex number with $|\zeta| = 1$ and $\Re \zeta^2 > 0$;
$W>0$ is a large parameter, and the potential $U: \mathbb{R} \to \mathbb{C}$ satisfies the following assumptions:
\begin{enumerate}[U1)]
\item $U$ is smooth and  $U(0) = U'(0) = 0$; 
\item $\zeta^2 U''(0) > 0$; [this condition ensures that the operator is approximately normal near the saddle]
\item $\Re U(x) \geq \frac{1}{C} \min(1, |x|^2)$ for all $x \in \mathbb{R}$; [in particular, $0$ is the unique
minimum of $\Re  U$]
\item  $U$ has an analytic extension to a strip $|\Im z| \leq c$ about the real axis
which    satisfies 
$|U'(z)| \leq C  (\max (1, \Re [U (z)]))^\gamma$ 
for some $\gamma > 1$ and all 
$z$ in this strip. 
      \newcounter{enumU_saved}
      \setcounter{enumU_saved}{\value{enumi}}
\end{enumerate}
We use the analyticity assumption to justify saddle point approximations; the second
red part of the condition is a mild regularity assumption which rules out wildly oscillating potentials such as
$x^2(1 + \exp(\sin e^x))$.
\begin{thm1*}[Main Proposition]
Let $K$ be an operator given by (\ref{eq:defK}), where $U$ satisfies the assumptions
U1)--U4). Denote
\begin{equation}\label{eq:defAlpha}
\alpha = W \sqrt{\zeta^2 U''(0)/2}~, \quad
\mu = \sqrt{\frac{\pi}{W^2\zeta^2 + \alpha}}~,
\end{equation}
and
\begin{equation}\label{eq:defgalpha}
g_\alpha(x) = \left(\frac{2\alpha}{\pi}\right)^{1/4} \exp(-\alpha x^2)~.
\end{equation}
Let
\begin{equation}\label{eq:matrix} K = \sqrt{\frac{\pi}{W^2 \zeta^2 + \alpha}}
\left(  \begin{array}{cc} A & B \\ C & D \end{array} \right) =
\mu \left(  \begin{array}{cc} A & B \\ C & D \end{array} \right)
\end{equation}
be the block representation of $K$  corresponding to the decomposition
$L_2(\mathbb{R}) = \mathbb{C}g_\alpha \oplus (\mathbb{C}g_\alpha)^\perp$; more formally, if $\hat{K} = \mu^{-1}K$, and
$P$ is the orthogonal projection to $g_\alpha^\perp$,
\[  \left(  \begin{array}{cc} A & B \\ C & D \end{array} \right) =
\left(  \begin{array}{cc} (1-P) \hat{K} (1-P) & (1-P) \hat{K} P \\ P \hat{K} (1-P) & P \hat{K} P \end{array} \right)~. \]
Then
\[
\left(  \begin{array}{cc} A & B \\ C & D \end{array} \right)
= \left(  \begin{array}{cc}1 + O(W^{-1-\delta}) & O(W^{-1-\delta}) \\
O(W^{-1-\delta}) & \text{of norm $\leq 1 - \sqrt{\frac{|U''(0)|}{2}} \frac{\Re \zeta^2}{W}
+ O(W^{-1-\delta})$} \end{array}\right)~,
\]
meaning that
\begin{align} A &= 1 + O(W^{-1-\delta})~, \label{Abound} \\
\|B\|, \|C\| &= O(W^{-1-\delta})~, \\
\|D\| &\leq 1 - \sqrt{\frac{|U''(0)|}{2}} \frac{\Re \zeta^2}{W}
+ O(W^{-1-\delta})~; \label{Dbound}
\end{align}
the exponent $\delta > 0$ depends only on $\gamma$, and the implicit constants in the $O$-notation
depend on $\gamma$ and the implicit  constants in the assumptions.
\end{thm1*}
Note that by assumption U2) above $\alpha>0$, hence $g_{\alpha }$ is a real valued function.
Here and forth, we slightly abuse notation and identify scalar multiples of $1-P$ with complex numbers.

\subsection{The main corollary}
Let $\lambda_{0}$ be the largest eigenvalue (in absolute value) of $K$; the existence of $\lambda_0$
is part of the statement of Corollary~\ref{cor} below. Let $u_{0}$ be the 
corresponding eigenfunction. Since $u_{0}$ is complex valued, we can fix the following
normalisation conditions:
\begin{equation}\label{normcond}
\|u_{0} \|^{2}= \langle u_{0},u_{0} \rangle =1,\qquad  \mbox{and} \qquad 
\langle u_{0}, g_{\alpha } \rangle \geq 0,
\end{equation}
where $g_{\alpha }$ was defined in \eqref{eq:defgalpha} above.
The Main Proposition yields the following corollary:

\begin{cor}\label{cor} In the setting of the Main Proposition,  $K$ has a largest eigenvalue (in absolute
value), which satisfies
\begin{equation}\label{eq:lam0}
\lambda_0 = \sqrt\frac{\pi}{W^2\zeta^2 + \alpha} (1 + O(W^{-1-\delta})) = \mu (1 + O(W^{-1-\delta}))~,
\end{equation}
and the corresponding eigenfunction  $u_0$ (with the  normalisation conditions (\ref{normcond}))
satisfies
$\| u_0 - g_\alpha \| \leq CW^{-\delta}$.
For any natural $n$ and any $u$ in the invariant subspace $\bar{u}_0^\perp$ of $K$,
\begin{equation}\label{eq:semigroup} \|K^n u \| \leq |\lambda_0|^n
\left(  1 - \sqrt{\frac{|U''(0)|}{2}} \frac{\Re \zeta^2}{W} + O(W^{-1-\delta}) \right)^n \, \|u\|~. \end{equation}
\end{cor}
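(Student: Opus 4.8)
\textbf{Proof proposal for Corollary~\ref{cor}.}
The plan is to treat the Main Proposition as a statement that, up to the scalar $\mu$, the operator $\hat K = \mu^{-1}K$ is a $2\times 2$ block matrix which is a small perturbation of the rank-one projection $1-P$ onto $\mathbb{C}g_\alpha$. First I would run a standard perturbation argument for the isolated eigenvalue: since $A = 1 + O(W^{-1-\delta})$ while $\|D\| \leq 1 - c/W + O(W^{-1-\delta})$ with $c = \sqrt{|U''(0)|/2}\,\Re\zeta^2 > 0$, there is a spectral gap of order $1/W$ between the ``$A$-block'' and the ``$D$-block.'' The off-diagonal entries $B, C$ are of size $O(W^{-1-\delta})$, which is $o(1/W)$ since $\delta > 0$; so the Schur complement / Riccati equation for the invariant subspace associated to the top eigenvalue is a contraction, and I would solve it by iteration. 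Concretely, one seeks $x \in g_\alpha^\perp$ with $\hat K (g_\alpha + x) = \hat\lambda(g_\alpha + x)$, which unfolds to $\hat\lambda = A + B x$ and $(\hat\lambda - D)x = C$; the map $x \mapsto (\hat\lambda(x) - D)^{-1} C$ is well-defined and contracting on a small ball once $\|(\hat\lambda-D)^{-1}\| = O(W)$ and $\|C\| = O(W^{-1-\delta})$, giving a unique fixed point with $\|x\| = O(W \cdot W^{-1-\delta}) = O(W^{-\delta})$. This yields an eigenvalue $\hat\lambda_0 = A + Bx = 1 + O(W^{-1-\delta})$, hence $\lambda_0 = \mu\hat\lambda_0 = \mu(1+O(W^{-1-\delta}))$ as in \eqref{eq:lam0}, and an eigenfunction proportional to $g_\alpha + x$; after imposing the normalisation \eqref{normcond} one gets $\|u_0 - g_\alpha\| \leq \|x\| + O(\|x\|^2) = O(W^{-\delta})$.

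Next I would show $\lambda_0$ is the largest eigenvalue in absolute value and identify the complementary invariant subspace. The complementary spectral subspace of $\hat K$ is the graph of a map from $g_\alpha^\perp$ to $\mathbb{C}g_\alpha$, again obtained from the dual Riccati equation; on this subspace $\hat K$ is conjugate (by an operator $I + O(W^{-\delta})$) to $D + (\text{correction of size }O(W^{-1-\delta}))$, so its spectral radius is $\leq \|D\| + O(W^{-1-\delta}) \leq 1 - c/W + O(W^{-1-\delta})$. Since this is strictly below $|\hat\lambda_0| = 1 + O(W^{-1-\delta})$, the eigenvalue $\lambda_0$ is indeed the one of largest modulus, and it is simple. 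The natural invariant subspace to use for the semigroup bound is precisely this complementary one; I would check that it coincides with $\bar u_0^\perp$ by a left-eigenvector computation: the left eigenfunction $v_0$ of $K$ (equivalently the eigenfunction of $K^*$ for $\bar\lambda_0$) is $\bar u_0 + O(W^{-\delta})$-close to $g_\alpha$ as well, since $K$ and $K^*$ have the same block structure up to complex conjugation and $g_\alpha$ is real; the spectral decomposition $L_2 = \mathbb{C}u_0 \oplus \bar u_0^\perp$ then matches the $\hat K$-invariant splitting because $\langle \bar u_0^\perp\text{-part}, u_0\rangle$ pairs correctly.

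For the quantitative semigroup estimate \eqref{eq:semigroup} I would work directly on the invariant subspace $\bar u_0^\perp$. Let $T$ be the restriction of $\mu^{-1}K$ to this subspace, so $T = SDS^{-1} + O(W^{-1-\delta})$ with $S = I + O(W^{-\delta})$ the conjugacy from the Riccati solution. The issue is that $\|T^n\|$ is not simply $\|T\|^n$ for a non-normal operator, so I would instead use that $\hat K$ is itself \emph{almost normal}: by the Main Proposition the full block matrix differs from its ``normal part'' only through $B, C$ of size $O(W^{-1-\delta})$ and through the non-normality hidden inside $D$ — but $D = PKP/\mu$ inherits almost-normality from $K$ (this is where condition U2) and the earlier analysis of the harmonic oscillator in Section~\ref{s:harm} enter). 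Quantitatively, one shows $\|\hat K^* \hat K - \hat K \hat K^*\| = O(W^{-1-\delta})$ and $\|\hat K\| \leq 1 + O(W^{-1-\delta})$; then on $\bar u_0^\perp$ one has $\|Tu\|^2 = \langle T^*T u, u\rangle \leq (1 - 2c/W + O(W^{-1-\delta}))\|u\|^2$ because $T^*T$ restricted there has top eigenvalue controlled by $\|D\|^2$ plus an almost-normality error, giving $\|Tu\| \leq (1 - c/W + O(W^{-1-\delta}))\|u\|$; iterating over $n$ steps and multiplying by $|\lambda_0|^n = \mu^n(1+O(W^{-1-\delta}))^n$ yields \eqref{eq:semigroup}. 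The main obstacle is precisely this last point: transferring ``almost normal'' from $K$ (and from the harmonic-oscillator facts of Section~\ref{s:harm}) to a \emph{uniform} one-step contraction factor on the complementary subspace, so that the $n$-th power bound follows by submultiplicativity of the one-step norm rather than by the much weaker spectral-radius bound — controlling the interplay between the $O(W^{-1})$ gap and the $O(W^{-1-\delta})$ non-normality error is what makes the exponent $\delta$ (and not just the leading constant) come out right.
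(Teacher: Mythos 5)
Your Riccati / Schur-complement fixed-point construction of the top eigenpair is a legitimate alternative to the paper's argument, which instead runs a homotopy $\hat{K}_t$ from $\mathrm{diag}(1, D)$ to $\hat{K}$ and uses the Schur--Banachiewicz formula to show the resolvent stays bounded on a small circle about $1$, so the rank of the spectral projection is constant along the homotopy; both routes deliver a simple eigenvalue with the stated location, and your $\|x\| = O(W\cdot W^{-1-\delta}) = O(W^{-\delta})$ cleanly yields the eigenvector bound. But the semigroup estimate \eqref{eq:semigroup} is where your plan has a real gap. You correctly observe that the naive conjugacy bound $\|T\| \leq \|S\|\,\|D\|\,\|S^{-1}\|$ loses (the factor $\|S\|\,\|S^{-1}\| = 1 + O(W^{-\delta})$ swamps the $c_0/W$ gain), and you propose to repair it by showing $\|\hat{K}^*\hat{K} - \hat{K}\hat{K}^*\| = O(W^{-1-\delta})$ and that $D$ is almost normal. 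Neither claim is supplied by the Main Proposition: the $g_\alpha^\perp$-block of the commutator $[\hat{K}^*, \hat{K}]$ contains $D^*D - DD^*$, and the Main Proposition controls only $\|D\|$, so nothing rules out $\|[D^*, D]\| = O(1)$. Establishing such a commutator bound would need a separate kernel estimate, and none is given --- nor is one needed. The paper's mechanism is more elementary: since $K^* = \overline{K}$, the left eigenfunction is $\bar{u}_0$, which like $u_0$ is $O(W^{-\delta})$-close to the real function $g_\alpha$; hence any unit $u \in \bar{u}_0^\perp$ has $|\langle u, g_\alpha\rangle| \leq CW^{-\delta}$. Writing $u = p' g_\alpha + u^\perp$ with $|p'| \leq CW^{-\delta}$ and $q' = \|u^\perp\|$, the raw block-norm estimates give
\[
\|\hat{K} u\|^2 \leq \bigl[1 + O(W^{-1-\delta})\bigr]|p'|^2 + \bigl[(1-c_0/W)^2 + O(W^{-1-\delta})\bigr]|q'|^2 + O(W^{-1-\delta})~,
\]
and since $|p'|^2 = O(W^{-2\delta})$ the first summand is absorbable into $(1-c_0/W)^2|p'|^2 + O(W^{-1-2\delta})$, yielding the one-step bound $\|\hat{K} u\| \leq (1 - c_0/W + O(W^{-1-\delta}))\|u\|$. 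Because $\bar{u}_0^\perp$ is $\hat{K}$-invariant (checked via $\langle \hat{K}u, \bar{u}_0\rangle = \langle u, \hat{K}^*\bar{u}_0\rangle = 0$), this iterates directly. So the crucial point is not normality of $D$ at all, but the smallness of the $g_\alpha$-component of vectors in $\bar{u}_0^\perp$, together with the fact that $\|B\|, \|C\| = o(1/W)$ cannot pump mass back into the slow direction fast enough to defeat the $c_0/W$ contraction in the $g_\alpha^\perp$-direction.
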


\noindent We remark that (\ref{eq:lam0}) can be restated as
\[
\lambda_{0} = \widetilde{\lambda}_{0} (1 + O(W^{-1-\delta}))~,
\]
where $\widetilde\lambda_0$ is the largest eigenvalue of the harmonic approximation
\begin{equation}\label{Ktilde}
\widetilde{K}(x, y) = \exp \left\{ -W^2 \zeta^2 (x-y)^2 - \frac{U''(0)}{4} (x^2 + y^2)\right\}~.
\end{equation}
This
remark is justified by the formul{\ae} of Section~\ref{s:harm} below.

\begin{proof}[Proof of Corollary~\ref{cor}]
According to the Main Proposition, $\hat{K} = \mu^{-1}K$ (with the normalising factor 
$\mu$ given by (\ref{eq:defAlpha}))
has the block structure
\[ 
\hat{K} = \left( \begin{array}{cc}
1 + O(W^{-1-\delta}) & O(W^{-1-\delta}) \\
O(W^{-1-\delta}) & D\end{array}\right)
\]
with respect to the decomposition
$L_2(\mathbb{R}) = \mathbb{C}g_\alpha \oplus (\mathbb{C}g_\alpha)^\perp$. Set
\[ 
\hat{K}_t = \left( \begin{array}{cc}
A_t  & B_t \\
C_t & D \end{array}\right) = t \hat{K} + (1-t)
\left( \begin{array}{cc}
1  & 0 \\
0 & D \end{array}\right)~, \quad 0 \leq t \leq 1~. 
\]
From \eqref{Dbound}, the largest eigenvalue of $\hat{K}_0$ is equal to $1$, 
whereas the rest of the spectrum lies in a disc of radius $1 - \mathrm{const}/W$. 
Let us show that, as $t$ varies from $0$ to $1$, the resolvent $R_z[\hat{K}_t] = (\hat{K}_t - z)^{-1}$ 
of $\hat{K}_t$ remains bounded on a circle $\mathcal{C}$ of radius $O(W^{-1-\delta})$ about $1$. 
This will imply that the spectral projection
\[ P_t = \frac{1}{2\pi i} \oint_\mathcal{C}  R_z[\hat{K}_t] dz \]
is a continuous function of $t \in [0,1]$, whence the rank of $P_t$ is  identically equal to $1$, 
its value at $t = 0$.

Let $R > 0$ be a large positive number, to be chosen shortly. We shall verify that the operator 
$D-z$ and the Schur complement 
\[ S_t(z) = (A_t - z) - B_t (D -z)^{-1} C_t\]
are invertible on the circle 
$|z-1| = RW^{-1-\delta}$  for all $t\in [0,1]$ (actually, the norm of
the inverse is bounded uniformly in $t$). Therefore we can apply
the Schur--Banachiewicz formula for the inverse of a block operator, according to which
the resolvent $R_z[\hat{K}_t]$ is given by
\[ R_z[\hat{K}_t] = \left(\begin{array}{cc}
S_t^{-1} & - S_t^{-1} B_t (D - z)^{-1} \\
- (D-z)^{-1} C_t S_t^{-1} & (D-z)^{-1} + (D-z)^{-1} C_t S_t^{-1} B_t (D-z)^{-1}
\end{array}\right)~.\]
One can choose $R>0$  such that
for all $z$ on the circle $|z-1| = RW^{-1-\delta}$ and all $t \in [0,1]$ one has 
$|A_{t}-z|\geq\frac{R}{2}W^{-1-\delta}$.
By  \eqref{Abound} this choice is independent of $W$. Hence the Schur complement
admits the bound 
\[ 
|S_t(z)| \geq \frac{R}{2}W^{-1-\delta} - O(W^{-2-2\delta} \times W)\geq  \frac{R}{4} W^{-1-\delta}~.
\]
Then we have:
\[ 
|S_t(z)^{-1}| \leq \frac{4}{R} W^{1+\delta}~. 
\]
Similarly, the norms of the other three blocks of $R_z[\hat{K}_t]$ are bounded, thus the application
of the Schur--Banachiewicz formula is justified, and
the norm of the resolvent is bounded for these $z$, uniformly in $0 \leq t \leq 1$. Hence
$\hat{K}_0 $ and $\hat{K}_1$ have the same number of eigenvalues in the interior of the 
circle $|z-1| \leq R W^{-1-\delta}$, i.e.\ $K$ has a unique eigenvalue $\lambda_0$ satisfying 
(\ref{eq:lam0}). The eigenvalue $\lambda_0$ is the largest one in absolute value due to the
estimate (\ref{eq:semigroup}) which we shall prove shortly.

\vspace{2mm}\noindent
Next, let $u_0$ be an  eigenfunction corresponding to $\lambda_0$, with normalisation
conditions \eqref{normcond}. Then we can
decompose $u_0$ as
\[ 
u_0 = p g_\alpha + u_0^\perp~, 
\quad  \text{where} \quad
p = \langle u_0, g_\alpha \rangle~, \quad
\langle u_0^\perp, g_\alpha \rangle = 0~; 
\]
the Main Proposition implies
\[ 
u_1 = \hat{K} u_0 = \left [(1+O(W^{-1-\delta}))p + O(W^{-1-\delta})\sqrt{1-|p|^2}
\right ] g_\alpha
+ u_1^\perp~, 
\]
where
\[ 
\| u_1^\perp\| \leq O(W^{-1-\delta}) |p| + \left (1 - \frac{1}{CW}\right ) \sqrt{1-|p|^2}~.
\]
On the other hand,
\[ u_1 = \hat{\lambda}_0 u_0 = p \hat{\lambda}_0 g_\alpha + \hat{\lambda}_0 u_0^\perp~, \]
where $\hat{\lambda}_0 = \mu^{-1} \lambda_0$;
comparing the norms of $u_1^\perp$ and $\hat{\lambda}_0 u_0^\perp$, using
(\ref{eq:lam0}) and $p=\langle u_{0}, g_\alpha \rangle\geq 0$, we obtain:
\[ 
\sqrt{1-|p|^2} \leq C W^{-\delta}~, \quad \text{whence} \quad
\| u_0 - g_\alpha\|  \leq  \| u_0^\perp\|+ \|  g_\alpha (p-1)\| \leq C' W^{-\delta}~. 
\]

\vspace{2mm}\noindent
Finally, let $u$ be a unit vector lying in the invariant subspace $\bar{u}_0^\perp$, of $\hat{K}$.
Then
\[ |\langle u, g_\alpha \rangle| \leq |\langle u, \bar{u}_0\rangle| +
|\langle u, \bar{u}_0 - g_\alpha \rangle| \leq CW^{-\delta}~. \]
Therefore
\[ u =  p' g_\alpha + u^\perp,~\quad u^\perp \perp g_\alpha~, \quad |p'|\leq CW^{-\delta}~.\]
Denote $q' = \|u^\perp\|=\sqrt{1-|p'|^{2}}$. Then
\begin{multline*}\| \hat{K} u \|^2
\leq \left\{ (1+O(W^{-1-\delta}))|p'| + O(W^{-1-\delta})|q'|\right\}^2 \\
+ \left\{ O(W^{-1-\delta})|p'|+\left  (1-\frac{c_0}{W} + O(W^{-1-\delta})\right)|q'|\right\}^2~,
\end{multline*}
where $c_0 = \sqrt{\frac{|U''(0)|}{2}} \Re \zeta^2$. Therefore
\[
\begin{split}
\|\hat{K} u \|^2
&\leq  \left [1 + O(W^{-1-\delta})\right ] |p'|^2 + \left [ (1-\frac{c_0}{W})^2 + O(W^{-1-\delta})
\right ] |q'|^2
+ O(W^{-1-\delta})\\
&\leq (1 - \frac{c_0}{W})^2 (|p'|^2 + |q'|^2) + O(W^{-1-\delta}) \\
&\leq \left (1 - \frac{c_0}{W} + O(W^{-1-\delta})\right )^2.
\end{split}
\]
Recalling that
\[ \langle \hat{K} u, \bar{u}_0 \rangle = \langle u, \hat{K}^* \bar{u}   _0 \rangle
= \langle u, \bar{\lambda}_0 \bar{u}_0 \rangle = 0~,  \]
we can iterate this estimate, thus obtaining
\[ 
\|\hat{K}^n u \| \leq \left (1 - \frac{c_0}{W} + O(W^{-1-\delta}) \right)^n
\]
for any $n$, as claimed.
\end{proof}

\section{Preliminaries: harmonic oscillator}\label{s:harm}

In this section, we collect the properties of the harmonic oscillator, defined by
\begin{equation}\label{eq:defHarm}
K_\hr(x, y) = \exp \left\{ - W^2 \zeta^2 (x-y)^2 - \frac{a+ib}{2} (x^2 + y^2)  \right\}
\end{equation}
for 
\begin{equation}\label{eq:welldef}
a > 0~, \quad b \in \mathbb{R}~, \quad |\zeta|=1~, \quad \Re \zeta^2 > 0~.
\end{equation}
The operator $K_\hr$ defined by (\ref{eq:defHarm}) is compact under the conditions (\ref{eq:welldef}), 
hence it has pure point spectrum. 

We are especially interested in the case when 
\begin{equation}\label{eq:welldef1}
\zeta^2 (a+ib)\in \mathbb{R}_{+}~,
\end{equation}
since this is consistent with assumption U2) for the operator $K$ defined in \eqref{eq:defK};
the properties stated here hold however in the generality of (\ref{eq:welldef}).

The eigenvalues of $K_\hr$ are given by the formula:
\begin{equation}\label{eq:evharm}
\lambda_j^{\hr} = \sqrt{\frac{\pi}{W^2 \zeta^2 + \alpha_\hr + \frac{a+ib}{2}}}
\left(\frac{W^2\zeta^2}{W^2\zeta^2+\alpha_\hr+\frac{a+ib}{2}} \right)^j~,
\end{equation}
where $\alpha_\hr$ is the solution of
\begin{equation}\label{eq:redefalphahr}
\alpha_\hr^2  = W^2 \zeta^2 (a+ib) + \frac{(a+ib)^2}{4}=  \alpha^{2}+ \frac{(a+ib)^2}{4}=
\alpha^{2}\left[1+ O (W^{-2}) \right]  ~,
\end{equation}
with $\Re \alpha_\hr > 0$, and $\alpha^{2}= W^2 \zeta^2 (a+ib)$.
This definition  is consistent with \eqref{eq:defAlpha}. 
Note that a solution with $\Re \alpha_\hr > 0$ exists as long as the right-hand side of 
(\ref{eq:redefalphahr}) is not real negative (this is ensured by (\ref{eq:welldef})). 
In the special  case when   both  \eqref{eq:welldef}
and  \eqref{eq:welldef1} hold we have
\[ 
\zeta^2 (a+ib)= \Re [\zeta^2 (a+ib)]= a/\Re \zeta^{2}>0~,
\]
hence for large $W$ the right-hand side of (\ref{eq:redefalphahr}) has positive real part.

The eigenfunction corresponding to $\lambda_0^{\hr} $ is exactly the function $g_{\alpha_\hr}$
given by (\ref{eq:defgalpha}) (with $\alpha$ replaced by $\alpha_\hr$); if $\alpha_\hr$ is real, the $L_2$ norm of 
$g_{\alpha_\hr}$ is equal to one.

\medskip
Let us comment on the validity of (\ref{eq:evharm})
for complex parameters. For any $z \in \mathbb{C}$, the Fredholm determinant
of the operator $z K_\hr$ is equal to
\begin{equation}\label{eq:Fr} \det(1 - z K_\hr) = \prod_{j=0}^\infty (1 - z \lambda_j^\hr)~.
\end{equation}
Also, $K_\hr$ is an analytic function of the parameters
$\zeta$ and $a+ib$, therefore, by Vitali's theorem, this remains true also 
for the left-hand side of (\ref{eq:Fr}), since it is the limit of a sequence of analytic functions
converging locally uniformly with respect to $\zeta$ and $a+ib$.
On the other hand, for real $\zeta$ and $a+ib$, the identity  (\ref{eq:evharm}) is well-known (see \cite[Section~5.2]{H}); thus the right-hand side of (\ref{eq:Fr}) is equal to 
\[ \prod_{j=0}^\infty \left(1 - z \sqrt{\frac{\pi}{W^2 \zeta^2 + \alpha_\hr + \frac{a+ib}{2}}}
\left(\frac{W^2\zeta^2}{W^2\zeta^2+\alpha_\hr+\frac{a+ib}{2}} \right)^j\right)~, \]
which is also an analytic function of $\zeta$ and $a+ib$. By analytic continuation, we have:
\[ \prod_{j=0}^\infty (1 - z \lambda_j^\hr)
= \prod_{j=0}^\infty \left(1 - z \sqrt{\frac{\pi}{W^2 \zeta^2 + \alpha_\hr + \frac{a+ib}{2}}}
\left(\frac{W^2\zeta^2}{W^2\zeta^2+\alpha_\hr+\frac{a+ib}{2}} \right)^j\right)\]
in the full range of parameters (\ref{eq:welldef}),
and this implies (\ref{eq:evharm}).

\vspace{3mm}\noindent
Now we turn to $K_\hr^* K_\hr$. Set $A = 2 W^2 \Re \zeta^2 + a$. One may check that
\begin{equation}\begin{split}\label{eq:K*K}
&(K_\hr^*K_\hr)(x, y)\\
&=
 \sqrt{\tfrac{\pi}{A}} \exp\Big\{ -
\frac{W^4 + W^2(a\bar\zeta^2 + (a-ib)\Re \zeta^2) + \tfrac{a}{2}(a-ib)}{{A}} x^2 \\
&\quad- \frac{W^4 + W^2(a\zeta^2 + (a+ib)\Re \zeta^2) + \frac{a}{2}(a+ib)}{{A}}  y^2 + 
\frac{2W^4}{{A}} xy \Big\}\\
&=  \sqrt{\tfrac{\pi}{{{A}}}}
 \exp\Big\{
 -\   
 \frac{\bar\alpha_{\hr }^{2} +\Re[W^{2}\zeta^{2} (a-ib)]+ \tfrac{1}{4}(a^{2}+b^{2}) }{ A} x^{2}
 \\
 &\quad  -\   
 \frac{\alpha_{\hr }^{2} +\Re[W^{2}\zeta^{2} (a-ib)]+ \tfrac{1}{4}(a^{2}+b^{2}) 
}{ {A} } y^{2}  -\frac{W^{4}}{{A}}   (x-y)^{2}\Big\}~. 
\end{split}\end{equation}
In particular, $K_\hr$ is normal ($K_\hr^*K_\hr = K_\hr K_\hr^*$) if and only 
if $\alpha_\hr^{2}$ 
is real (which happens if and only if $\alpha_\hr > 0$). More generally,
two operators of the form (\ref{eq:defHarm}) commute if and only if they share the same $\alpha_\hr$.

From (\ref{eq:K*K}), $K_\hr^*K_\hr$ is similar (conjugate) to the operator
\[\begin{split}
T_\hr(x, y) 
&=  e^{-i \frac{ \Im\alpha_{\hr }^{2} }{{A}}x^{2} }
(K_\hr^*K_\hr)(x, y)e^{+i \frac{ \Im\alpha_{\hr }^{2} }{{A}}y^{2} }\\
&= \sqrt{\tfrac{\pi}{{A}}} 
\exp\left\{
- \frac{W^4}{{A}} \, (x-y)^2
- 2a \, \left( 1 - \frac{a}{2A} \right) \,\frac{x^2 + y^2  }{2} \right\}
\end{split}\]
of the form (\ref{eq:defHarm}). This allows to compute the singular values
\[ s^{{\hr}}_0 \geq s^{{\hr}}_1 \geq \cdots \]
 of $K_\hr$: 
\[
\left(s^{{\hr}}_{j}\right)^{2} = \sqrt{\frac{\pi^{2}}{W^{4}  + 2a A + \sqrt{[2W^{4}+aA]aA} }}
\left( \frac{W^4}{ W^{4}  + 2a A + \sqrt{[2W^{4}+aA]aA} }   \right)^j~.
\]
 If $\alpha_\hr > 0$, we have $s^{{\hr}}_{j} = |\lambda_j^{{\hr}}|$
for any $j$. 
If {instead we} require $\alpha>0$,  then  $\alpha_{\hr}$ is  real up to an error term of order $O (W^{-2})$.
The corresponding operator is almost normal. More precisely, we have the following result.
\begin{lemma}\label{l:singharm}
If $K_\hr$ is an operator of the form (\ref{eq:defHarm}) with real (positive) $\zeta^2(a+ib)$, then
for any fixed $j$
\[ \frac{s^{{\hr}}_j}{|\lambda^\hr_j|} = 1 + O(W^{-2})~,\]
where the implicit constant may depend on $j$. Moreover,  for any $0<\epsilon<1$,
\[
\|K_\hr g_{\alpha }-\mu g_{\alpha }\| =|\mu | O (W^{-2+2\epsilon })  \qquad
\mbox{and}\qquad  \|{\tilde{g}}-g_{\alpha }\| \leq  O (W^{-2})~,
\]
where $\tilde{g}$ is the top normalized eigenfunction for $K_\hr^{*}K_\hr$,
and $g_{\alpha }$ and $\mu $ are  given by (\ref{eq:defgalpha}) and \eqref{eq:defAlpha}  respectively.
\end{lemma}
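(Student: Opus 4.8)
The plan is to exploit the explicit formulae already derived in this section for the eigenvalues $\lambda_j^\hr$ (via \eqref{eq:evharm}) and the singular values $s_j^\hr$, and to expand both in powers of $W^{-1}$. Recall that $\alpha_\hr^2 = \alpha^2 + \tfrac14(a+ib)^2 = \alpha^2(1+O(W^{-2}))$ and that under the hypothesis $\zeta^2(a+ib)\in\mathbb{R}_+$ the quantity $\alpha^2 = W^2\zeta^2(a+ib)$ is real and positive of size $\asymp W^2$, so $\alpha_\hr = \alpha(1+O(W^{-2}))$ with $\alpha>0$. First I would substitute this into the formula for $|\lambda_j^\hr|^2$ and into the formula for $(s_j^\hr)^2$ in terms of $A = 2W^2\Re\zeta^2 + a$, and check that the leading terms agree: both are $\pi/(W^2 + O(1))$ times a $j$-th power of a ratio that equals $1 - O(W^{-1})$ up to relative corrections $O(W^{-2})$. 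Concretely, using $W^4 + 2aA + \sqrt{(2W^4+aA)aA} = W^4(1 + O(W^{-2}))$ after dividing through by $W^4$ and Taylor-expanding the square root, one sees $(s_j^\hr)^2 / |\lambda_j^\hr|^2 = 1 + O(W^{-2})$ term by term; since $j$ is fixed the $j$-th power contributes only a further bounded factor, giving the first claim. The bookkeeping here is the place where one must be slightly careful that the $O(W^{-2})$ corrections in numerator and denominator really cancel to the stated order rather than merely to $O(W^{-1})$; I expect this to be the main (though still routine) computational obstacle.

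For the second claim, the strategy is to compare $g_\alpha$ with the true top eigenfunction of $K_\hr$. By the formula recalled above, the eigenfunction associated with $\lambda_0^\hr$ is exactly $g_{\alpha_\hr}$. Since $\alpha_\hr - \alpha = \alpha\cdot O(W^{-2}) = O(W^{-1})$ as an absolute quantity but the relative change is $O(W^{-2})$, a direct computation of $\|g_{\alpha_\hr} - g_\alpha\|^2$ — expanding the Gaussian overlap $\langle g_{\alpha_\hr}, g_\alpha\rangle$, which is an elementary Gaussian integral equal to $(2\sqrt{\alpha_\hr\alpha}/(\alpha_\hr+\alpha))^{1/2}$ up to normalisation — shows that this norm is $O(W^{-2})$, because the first-order term in the expansion around $\alpha_\hr = \alpha$ vanishes (the overlap is $1 - O((\alpha_\hr-\alpha)^2/\alpha^2) = 1 - O(W^{-4})$). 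Hence $\|K_\hr g_\alpha - \mu g_\alpha\| \leq \|K_\hr(g_\alpha - g_{\alpha_\hr})\| + \|(\lambda_0^\hr - \mu) g_{\alpha_\hr}\| + |\mu|\,\|g_{\alpha_\hr}-g_\alpha\|$; the first term is bounded by $\|K_\hr\|\cdot O(W^{-2}) = |\mu|\cdot O(W^{-2})$ since $\|K_\hr\| = s_0^\hr = |\mu|(1+o(1))$, the second by $|\lambda_0^\hr - \mu|$ which the eigenvalue expansion shows is $|\mu|\cdot O(W^{-2})$, and the third is $|\mu|\cdot O(W^{-2})$. This actually gives the bound with $\epsilon = 0$; the weaker $O(W^{-2+2\epsilon})$ in the statement leaves room if one prefers cruder estimates on $\|K_\hr\|$ near the boundary of validity.

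Finally, for $\|\tilde g - g_\alpha\| = O(W^{-2})$, I would use the explicit form of $K_\hr^* K_\hr$ from \eqref{eq:K*K}, or rather its conjugate $T_\hr$, which is again a harmonic-oscillator kernel of the form \eqref{eq:defHarm} with real positive parameters (it has $\zeta^2 \to W^2/A$ real, and quadratic coefficient $2a(1 - a/(2A)) > 0$ real). For such an operator the top eigenfunction is $g_\beta$ with $\beta$ the corresponding $\alpha_\hr$-parameter, computed from \eqref{eq:redefalphahr} with the new data; an expansion in $W^{-1}$ shows $\beta = \alpha(1 + O(W^{-2}))$. The top eigenfunction of $K_\hr^* K_\hr$ itself is then $e^{i\Im\alpha_\hr^2 x^2/A} g_\beta$, and since $\Im\alpha_\hr^2 = O(1)$ while $A \asymp W^2$, the phase factor is $1 + O(W^{-2})$ uniformly on the effective support $|x| \lesssim W^{-1/2}\cdot\text{(log factor)}$ of $g_\beta$; combining with $\|g_\beta - g_\alpha\| = O(W^{-2})$ (same Gaussian-overlap computation as before) and normalising gives $\|\tilde g - g_\alpha\| = O(W^{-2})$. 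The one subtlety is that multiplying by the phase changes the normalisation, so one should normalise $\tilde g$ first and then estimate; since the phase has modulus one pointwise this costs nothing. The only genuine obstacle in the whole lemma is ensuring that every expansion is carried to the correct order in $W^{-1}$, i.e.\ that the apparent $O(W^{-1})$ absolute shifts in $\alpha_\hr, \beta$ produce only $O(W^{-2})$ errors in the relevant normalised quantities; everything else is elementary Gaussian calculus.
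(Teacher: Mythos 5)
Your proposal is correct in substance and, for the first and third claims, follows the same route as the paper (explicit expansion of the eigenvalue/singular-value formulae, and the $T_\hr$-conjugation trick for the top singular vector). For the second claim, you take a genuinely different and in fact cleaner path: instead of computing $K_\hr g_\alpha$ explicitly in closed form and splitting the integral at $|x|=W^\epsilon$ (the source of the paper's $\epsilon$-loss), you write $K_\hr g_\alpha-\mu g_\alpha = K_\hr(g_\alpha-g_{\alpha_\hr})+(\lambda_0^\hr-\mu)g_{\alpha_\hr}+\mu(g_{\alpha_\hr}-g_\alpha)$ and bound each piece by $|\mu|\,O(W^{-2})$ using the Gaussian-overlap identity and $\lambda_0^\hr/\mu=1+O(W^{-2})$. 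This does deliver the bound with $\epsilon=0$, as you observe; the paper's statement is simply non-sharp at this point. One caution with your route: $\alpha_\hr$ is complex (its imaginary part is $O(W^{-1})$), so $g_{\alpha_\hr}$ is not automatically unit-norm; the overlap computation still gives $\|g_{\alpha_\hr}-g_\alpha\|=O(W^{-2})$, but you should state that you are using the relative perturbation $\alpha_\hr/\alpha=1+O(W^{-2})$ with a complex $O$-term, and that $\|g_{\alpha_\hr}\|^2=(|\alpha_\hr|/\Re\alpha_\hr)^{1/2}=1+O(W^{-4})$, so the non-normalisation costs nothing.

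There is one concrete slip in the first part. You assert $W^4+2aA+\sqrt{(2W^4+aA)aA}=W^4(1+O(W^{-2}))$ and describe both $(s_j^\hr)^2$ and $|\lambda_j^\hr|^2$ as $\pi/(W^2+O(1))$ times a power. Neither statement is right as written: since $A\asymp W^2$, one finds $W^4+2aA+\sqrt{(2W^4+aA)aA}=W^4\bigl(1+2W^{-1}\sqrt{a\Re\zeta^2}+O(W^{-2})\bigr)$, and similarly $|W^2\zeta^2+\alpha_\hr+\tfrac{a+ib}{2}|$ carries a non-vanishing $W^{-1}$-order relative correction. The point of the paper's computation is precisely that these $W^{-1}$-order corrections coincide, because the constraints $\Re\zeta^2>0$, $|\zeta|=1$, $\zeta^2(a+ib)>0$ force $\zeta^2(a+ib)\,\Re\zeta^2=a$, hence $\Re\zeta^2\sqrt{\zeta^2(a+ib)}=\sqrt{a\Re\zeta^2}$; only then does the ratio collapse to $1+O(W^{-2})$. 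You do flag that ``one must be slightly careful that the corrections really cancel,'' so you have located the right issue, but the intermediate claim should be corrected and the identity $\zeta^2(a+ib)\Re\zeta^2=a$ made explicit — it is the crux of that part of the lemma, not merely routine bookkeeping.
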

\begin{proof}
By the formul{\ae} for $\lambda_j^\hr$ and $s^{{\hr}}_{j}$ given above and using $\zeta^2 (a+ib)>0$
\[
\frac{|\lambda^{\hr}_{0}|^{2}}{\left(s^{{\hr}}_{0}\right)^{2}}= 
\sqrt{\frac{ W^{4}  + 2W^{3}\sqrt{a\Re \zeta^2   } + O (W^{2})  }{
W^{4}+ 2W^{3}\Re \zeta^2\sqrt{\zeta^2 (a+ib) } +O (W^{2})}}.
\]
Using the constraints $\Re \zeta^{2}>0$, $|\zeta^{2}|=1$ and $\zeta^2 (a+ib)>0$
we see that $\zeta^2 (a+ib) \Re \zeta^{2}=a$, whence 
$|\lambda^{\hr}_{0}|^{2}/\left(s^{{\hr}}_{0}\right)^2= (1+O (W^{-2}))$.
The same proof applies to the case $j>0$.
To prove the second part, we see that 
\[
(K_\hr g_{\alpha }) (x) = \mu g_{\alpha } (x)\, c (\alpha ) e^{-x^{2} d (\alpha )}~, 
\]
where
\[
c (\alpha )= \frac{1}{\sqrt{ 1+ \frac{a+ib}{2 (\alpha + W^{2} \zeta^{2} )}  }   }
= 1+ O (W^{-2}),
\quad d (\alpha )= \frac{ \frac{a+ib}{2}}{1+\frac{2[ \alpha +W^{2}\zeta^{2}] }{a+ib} } = O (W^{-2}).
\]
Then, using the exponential decay of $g_{\alpha }$,
\begin{multline*}
\|K_\hr g_{\alpha }-\mu g_{\alpha }\| \leq  |\mu | |c (\alpha )-1| + |\mu c (\alpha )|\, 
\left \|g_{\alpha }[ e^{-x^{2}d (\alpha )}-1 ]\mathbf{1}_{|x|\leq W^{\epsilon }} \right \|\\
 +  |\mu c (\alpha )|\, 
\left \|g_{\alpha }[ e^{-x^{2}d (\alpha )}-1 ]\mathbf{1}_{|x|> W^{\epsilon }} \right \| = 
|\mu |\,O (W^{-2+2\epsilon })~.
\end{multline*}
Finally, to prove the last inequality, we remark that
\[
\tilde{g} (x) = \left(\tfrac{2\alpha_{T} }{\pi } \right)^{1/4}  
e^{i \frac{ \Im\alpha_{\hr }^{2} }{A}x^{2} } e^{-\alpha_{T} x^{2}}~, 
\]
where $ \exp [-\alpha_{T} x^{2} ]$ is a top eigenfunction for $T_{h}$, and $\alpha_{T}$
is the real positive solution of  
\[
\alpha_{T}^{2}= \frac{W^{4}}{A} 2a \left(1- \frac{a}{2A} \right) + a^{2} 
\left(1- \frac{a}{2A} \right)^{2}= \alpha^{2} [1+O (W^{-2})]~.
\]
By assumption, $\Im \alpha_{\hr}^2 = O(1)$ and 
$\alpha_T =  \alpha (1 + O(1/W^{2})) =\alpha  + O(1/W)$, therefore
\[ \tilde{g}(x) = (1 + O(W^{-2})) e^{O(W^{-1})x^2} g_\alpha(x)~.\]
Hence 
\begin{align*}
\|\tilde{g}  -g_{\alpha }\|^{2} &\leq 
\int g_{\alpha } (x)^{2} 
\left | 1- e^{O(W^{-1}) x^{2}} \right|^{2} dx + O (W^{-4}) \\
& \leq O(W^{-2}) \int g_{\alpha }(x)^{2} x^{4}  e^{O(W^{-1}) x^{2}} dx + O (W^{-4})
= O(W^{-4}),\end{align*}
where in the last line we applied $\left | 1- e^{x} \right|\leq |x| e^{x}$.
\end{proof}

\section{Proof of the Main Proposition}\label{s:main}

Similarly to the semi-classical arguments in the self-adjoint case (see \cite[(5.6.1)]{H}), we
separate the contribution of the vicinity of the saddle point and the rest of the real line as
follows. Let $T(x, y)$ be a kernel, and suppose
$\chi_1^2 + \chi_2^2 = 1$ is a partition of unity. Then
\begin{equation}\label{eq:helff} T(x, y) = \sum_{j=1}^2 \chi_j(x) T(x, y) \chi_j(y) + 
\sum_{j=1}^2 R_j(x, y)~, \end{equation}
where
\[ R_j(x, y) = \frac{1}{2} (\chi_j(x) - \chi_j(y))^2 T(x, y)~. \]
In operator notation,
\[ T = \sum_{j=1}^2 \chi_j T \chi_j + \sum_{j=1}^2 R_j~. \]

\vspace{1mm}\noindent Another ingredient is Schur's bound (see \cite[Lemma 4.4.1]{H} for a proof)
\begin{equation}\label{eq:schur} 
\| T \| \leq \sqrt{\sup_x \int dy |T(x, y)|} \sqrt{\sup_y \int dx |T(x, y)|}~, \end{equation}
which, in the case when $|T(x, y)| = |T(y, x)|$,  assumes the form
\[ \| T \| \leq \sup_x \int dy \, |T(x, y)|~. \]

\vspace{1mm}\noindent
The difference from the usual setting stems from the fact that $K$ is not self-adjoint.
This is why we work with the self-adjoint operator $K^* K$, and our main effort will be invested in
decent bounds on the kernel.

\vspace{2mm}\noindent
The Main Proposition will follow from the next three lemmata, which are applied to estimate
the four blocks $A,B,C,D$ of (\ref{eq:matrix}).
We shall compare our operator $K$ with its harmonic approximation $\K$ introduced in (\ref{Ktilde}),
which is approximately normal due to assumption U2) of Section~\ref{sub:assump} and Lemma~\ref{l:singharm}.

\begin{lemma}\label{l:1}
Let $K$ be an operator given by (\ref{eq:defK}), so that $U$ satisfies the assumptions
U1) and U3). If $\alpha > 0$ is such that
\begin{equation}\label{eq:alphacond}
\left| \alpha^2 - W^2 \zeta^2 \frac{U''(0)}{2} \right| \leq CW^{3/2}~,
\end{equation}
then the asymptotics of the integral
\[ 
I(\alpha) = \iint dx dy \exp \left\{ -W^2 \zeta^2 (x-y)^2 - \frac{1}{2}U(x) - \alpha x^2
- \frac{1}{2} U(y) - \alpha y^2\right\}
\]
is given by
\[ 
I(\alpha) = (1 + O(W^{-3/2 + \epsilon})) \sqrt{\frac{\pi}{W^2\zeta^2+\alpha}}
\sqrt{\frac{\pi}{2\alpha}}~, 
\]
for any $\epsilon > 0$.
\end{lemma}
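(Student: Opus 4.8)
The integral $I(\alpha)$ is a two-dimensional Laplace/saddle-point integral whose phase $\Phi(x,y) = W^2\zeta^2(x-y)^2 + \alpha x^2 + \alpha y^2 + \tfrac12 U(x) + \tfrac12 U(y)$ has real part bounded below (by U3)), with the dominant contribution concentrated near the origin on a scale $\sim W^{-1/2}$ (since $\alpha \sim W$). The plan is to (i) localise to a window $|x|,|y| \le W^{-1/2+\epsilon'}$ for a suitable small $\epsilon' > 0$, showing the complement is exponentially negligible; (ii) on this window, replace $U$ by its quadratic Taylor polynomial $\tfrac12 U''(0)x^2$ (and similarly in $y$), so that the integrand becomes exactly Gaussian; (iii) evaluate the resulting Gaussian integral in closed form; (iv) control the Taylor-remainder error and the tail error, and check that the condition (\ref{eq:alphacond}) is exactly what makes the answer come out as stated.

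For step (i), on the region where $\max(|x|,|y|) \ge W^{-1/2+\epsilon'}$ I would use $\Re[\alpha x^2 + \alpha y^2] \ge \tfrac{c}{1}\, W\,(x^2+y^2)$ together with $\Re U \ge 0$ and $\Re \zeta^2 > 0$ to bound $|e^{-\Phi}| \le e^{-cW(x^2+y^2)} \le e^{-c W^{2\epsilon'}} e^{-c'W(x^2+y^2)/2}$ on that region; integrating the residual Gaussian gives a contribution $O(e^{-cW^{2\epsilon'}})$ times something polynomially large, hence superpolynomially small. For step (ii), inside the window, U4) (or just smoothness and U1)) gives $|\tfrac12 U(x) - \tfrac14 U''(0)x^2| \le C|x|^3 \le C W^{-3/2+3\epsilon'}$, so $e^{-\tfrac12 U(x)} = e^{-\tfrac14 U''(0)x^2}(1 + O(W^{-3/2+3\epsilon'}))$ uniformly on the window; multiplying the two such factors and integrating yields the Gaussian integral times $(1 + O(W^{-3/2+3\epsilon'}))$. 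For step (iii): the quadratic form $W^2\zeta^2(x-y)^2 + (\alpha + \tfrac14 U''(0))(x^2+y^2)$ diagonalises in the variables $u = (x+y)/\sqrt2$, $v=(x-y)/\sqrt2$, giving coefficients $\alpha + \tfrac14 U''(0)$ on $u^2$ and $2W^2\zeta^2 + \alpha + \tfrac14 U''(0)$ on $v^2$, so the Gaussian integral equals $\pi/\sqrt{(\alpha+\tfrac14 U''(0))(2W^2\zeta^2+\alpha+\tfrac14 U''(0))}$. Here one should be slightly careful that these complex "variances" have real part bounded away from $0$ (true since $\Re\zeta^2>0$, $\alpha>0$, and, using (\ref{eq:alphacond}), $U''(0)$ contributes only a lower-order shift with $\zeta^2 U''(0)>0$), so the Gaussian integrals converge and the square-root branches are the natural ones. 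Finally one checks $\sqrt{(\alpha+\tfrac14 U''(0))(2W^2\zeta^2+\alpha+\tfrac14 U''(0))}$ agrees with $\sqrt{2\alpha}\,\sqrt{W^2\zeta^2+\alpha}$ up to a factor $1 + O(W^{-3/2})$: indeed $\alpha + \tfrac14 U''(0) = \alpha(1 + O(W^{-2}))$ and, by (\ref{eq:alphacond}), $2W^2\zeta^2 + \alpha + \tfrac14 U''(0) = 2(W^2\zeta^2 + \alpha) + (\alpha - 2W^2\zeta^2\cdot\tfrac{U''(0)}{?})\cdots$ — more directly, since $\alpha^2 = W^2\zeta^2 U''(0)/2 + O(W^{3/2})$ gives $\tfrac14 U''(0) = \alpha^2/(2W^2\zeta^2) + O(W^{-1/2}) = O(W^{-1/2})$ relative to $W^2$, the shift is lower order.

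\textbf{Assembling the estimate.} Combining (i)–(iii): $I(\alpha) = (\text{Gaussian over window}) + O(e^{-cW^{2\epsilon'}}) = (1 + O(W^{-3/2+3\epsilon'}))\,\pi/\sqrt{(\alpha+\tfrac14 U''(0))(2W^2\zeta^2+\alpha+\tfrac14 U''(0))} = (1 + O(W^{-3/2+3\epsilon'}))\sqrt{\pi/(2\alpha)}\sqrt{\pi/(W^2\zeta^2+\alpha)}$, and renaming $3\epsilon' \rightsquigarrow \epsilon$ gives the claim for any $\epsilon > 0$ (taking $\epsilon' = \epsilon/3$; any $\epsilon' \in (0, 1/2)$ is admissible in the window).

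\textbf{Main obstacle.} The routine parts are the Gaussian computation and the Taylor remainder; the part requiring genuine care is step (i), controlling the contribution away from the saddle, because the phase is complex and one cannot simply bound $|e^{-\Phi}|$ by $1$ — one must exploit that $\Re[\alpha(x^2+y^2)] \gtrsim W(x^2+y^2)$ (coming from $\alpha$ having positive real part of order $W$, which in turn relies on $\zeta^2 U''(0) > 0$, i.e. U2), although the lemma as stated only invokes U1), U3), plus the hypothesis $\alpha>0$) and that $\Re\zeta^2 > 0$ contributes an extra $\Re[W^2\zeta^2(x-y)^2] \ge 0$. A secondary subtlety is that, strictly, Lemma~\ref{l:1} does not assume U4), so to bound the cubic Taylor remainder of $U$ on the small window one should use only the smoothness from U1) (a local bound $|U(x) - \tfrac14 U''(0)\cdot 2x^2|\le C|x|^3$ valid on a fixed neighbourhood of $0$), which suffices since the window shrinks; the growth hypothesis U4) is needed only to make the tail bound in step (i) clean, but in fact U3) alone already gives the required exponential suppression of the tail.
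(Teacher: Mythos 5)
Your approach is essentially the one the paper uses: a Laplace-type argument with (i) localisation to a window of scale $W^{-1/2+\epsilon'}$, (ii) replacement of $U$ by its quadratic Taylor polynomial, (iii) exact evaluation of the resulting Gaussian, and (iv) error control. The one organisational difference is that the paper rotates to the coordinates $u=(x+y)/\sqrt2$, $v=(x-y)/\sqrt2$ \emph{before} localising, which lets it use an anisotropic window $|v|\le W^{-1+\epsilon/3}$, $|u|\le W^{-1/2+\epsilon/3}$ adapted to the two very different widths of the Gaussian, and then expand $U$ about $u/\sqrt2$ alone (so the cubic error is $O(v^2+|u|^3)$). Your symmetric window in the original variables and symmetric cubic remainder $O(|x|^3+|y|^3)$ yield the same $O(W^{-3/2+\epsilon})$ because both $x$ and $y$ live on scale $W^{-1/2}$; both bookkeepings are fine, and your tail bound using only $\alpha>0$, $\Re\zeta^2>0$ and U3) is correct.

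The one place your write-up genuinely needs to be tightened is the final comparison of normalisations, which is currently garbled (the displayed fragment $2W^2\zeta^2\cdot\tfrac{U''(0)}{?}$ and the sentence ``$=O(W^{-1/2})$ relative to $W^2$'' do not parse, and a na\"{\i}ve ``the shift is lower order'' argument would only give $1+O(W^{-1})$, not $1+O(W^{-3/2})$). This step is exactly where the hypothesis (\ref{eq:alphacond}) is used and one must exhibit the cancellation explicitly: writing $c=\tfrac14 U''(0)$,
\begin{equation*}
(\alpha+c)\bigl(2W^2\zeta^2+\alpha+c\bigr)-2\alpha\bigl(W^2\zeta^2+\alpha\bigr)
= -\Bigl(\alpha^2-\tfrac12 W^2\zeta^2 U''(0)\Bigr)+2\alpha c+c^2 = O\bigl(W^{3/2}\bigr)
\end{equation*}
by (\ref{eq:alphacond}), while both products are of order $W^3$, so their ratio is $1+O(W^{-3/2})$ and so is its square root. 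Spelling this out (as the paper does in its last two lines) closes the argument; the rest of your proposal is sound.
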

\begin{rmk}
Although the bound is valid for any $\alpha > 0$ satisfying (\ref{eq:alphacond}),
we shall only apply it to $\alpha = W \sqrt{\zeta^2 U''(0)/2}$ of (\ref{eq:defAlpha}).
\end{rmk}

\begin{lemma}\label{l:1.5}
In the setting of the Main Proposition,
\[ 
\| (K - \widetilde{K}) g_\alpha \| = O( W^{-3/2 + \epsilon} |\mu|)
\]
for any $\epsilon > 0$, where $\mu, g_{\alpha }$ and $ \widetilde{K}$ were introduced in
\eqref{eq:defAlpha} \eqref{eq:defgalpha} and \eqref{Ktilde}.
\end{lemma}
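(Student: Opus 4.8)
\textbf{Proof proposal for Lemma~\ref{l:1.5}.}

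The plan is to write the difference $(K-\widetilde K)g_\alpha$ as an explicit integral operator applied to the Gaussian $g_\alpha$, and to show that the resulting function is small in $L_2$ by a saddle point / Laplace-type estimate localized near $x=0$, together with a crude tail bound away from $0$. Concretely, since both $K$ and $\widetilde K$ have kernels of the form $\exp\{-W^2\zeta^2(x-y)^2\}$ times a product of one-variable factors, the action on $g_\alpha(y)=(2\alpha/\pi)^{1/4}e^{-\alpha y^2}$ produces, for each fixed $x$, a Gaussian integral in $y$ whose exponent is $-W^2\zeta^2(x-y)^2-\tfrac12 U(y)-\alpha y^2$ for $K$ and $-W^2\zeta^2(x-y)^2-\tfrac{U''(0)}{4}y^2-\alpha y^2$ for $\widetilde K$ (carrying along the $x$-dependent factors $e^{-U(x)/2}$ resp.\ $e^{-U''(0)x^2/4}$ outside). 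The key point is that by assumption U1) we have $U(y)=\tfrac{U''(0)}{2}y^2+O(|y|^3)$ near $0$, so the two exponents agree up to $O(|y|^3)$, and by U2), $\Re(\zeta^2 U''(0))>0$ together with $\Re\zeta^2>0$ forces the quadratic form $\Re[W^2\zeta^2(x-y)^2+(\alpha+\tfrac{U''(0)}{4})y^2]$ to be positive definite and of size $\sim W$, so the Gaussian integrals concentrate in the window $|y|\lesssim W^{-1/2+\epsilon'}$; there the cubic error contributes a factor $1+O(W^{-3/2+\epsilon})$.

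The steps I would carry out, in order: (i) Fix the localization scale $\eta=W^{-1/2+\epsilon'}$ and split $\mathbbm{1}=\mathbbm{1}_{|x|\le\eta}+\mathbbm{1}_{|x|>\eta}$ in the outer variable and likewise in the inner variable. (ii) On the bulk region $|x|,|y|\le\eta$: expand $U(y)-\tfrac{U''(0)}{2}y^2=O(|y|^3)=O(W^{-3/2+3\epsilon'})$ and similarly for the $x$-factor $e^{-U(x)/2}$ versus $e^{-U''(0)x^2/4}$; use $|e^w-1|\le|w|e^{|w|}$ and evaluate the remaining Gaussian integral in $y$ explicitly (its value is $\sim\sqrt{\pi/(W^2\zeta^2+\alpha)}$ by the formula already used in Section~\ref{s:harm}, with real part of the prefactor controlled via $\Re\zeta^2>0$), obtaining that the bulk contribution to $\|(K-\widetilde K)g_\alpha\|$ is $O(W^{-3/2+\epsilon}|\mu|)$ after integrating $|x|^2$-type corrections against the decaying Gaussian. (iii) For the tail in $y$ (i.e.\ $|y|>\eta$, any $x$): here $g_\alpha(y)=O(e^{-\alpha\eta^2})=O(e^{-cW^{2\epsilon'}})$, super-polynomially small, and $\Re U(y)\ge 0$ by U3) makes both kernels bounded in modulus by $\exp\{-\Re[W^2\zeta^2](x-y)^2\}$ up to the Gaussian damping; Schur's bound \eqref{eq:schur} then gives an operator-norm contribution that is $e^{-cW^{2\epsilon'}}$, hence negligible against any power of $W$. (iv) For the tail in $x$ with $|y|\le\eta$: use $\Re U(x)\ge\tfrac1C\min(1,|x|^2)$ from U3) to bound $|e^{-U(x)/2}|\le e^{-\min(1,|x|^2)/(2C)}$ and the analogous quadratic bound for $e^{-U''(0)x^2/4}$ (note $\Re(U''(0))>0$ since $\zeta^2U''(0)>0$ and $\Re\zeta^2>0$ — more precisely $U''(0)=\overline{\zeta^2}\cdot(\zeta^2U''(0))$ has $\Re U''(0)=\Re\zeta^2\cdot\zeta^2U''(0)>0$); the Gaussian integral in $y$ over $|y|\le\eta$ is $O(\sqrt{\pi/(W^2\zeta^2+\alpha)})=O(|\mu|)$ in modulus, and integrating the squared $x$-factors over $|x|>\eta$ produces another $O(e^{-cW^{2\epsilon'}})$ or at worst a small power saving, certainly $O(W^{-3/2+\epsilon}|\mu|)$. (v) Collect the four pieces.

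The main obstacle I anticipate is step (ii): keeping honest track of the \emph{complex} Gaussian integral in $y$ and its modulus. One must verify that after completing the square the effective variance has real part of order $W^{-1}$ (so that the concentration window $W^{-1/2+\epsilon'}$ is correct and the cubic remainder really is $O(W^{-3/2+\epsilon})$), and that the prefactor $\sqrt{\pi/(W^2\zeta^2+\alpha)}$ — which is complex — has modulus comparable to $|\mu|$ so the normalization matches the claimed bound. This is exactly where assumptions U2) and $\Re\zeta^2>0$ are used quantitatively, and where one should probably shift the $y$-contour (using the analytic extension from U4), though here only a strip near the real axis is needed) to pass through the relevant saddle; alternatively one can avoid contour shifts entirely by direct estimation of the real part of the exponent, which I would prefer for a self-contained argument. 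Everything else is routine Gaussian bookkeeping of the kind already appearing in Lemma~\ref{l:singharm}.
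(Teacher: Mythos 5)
Your overall approach — localize all variables at scale $W^{-1/2+\epsilon'}$ near the origin, Taylor expand $U$ there to pick up a multiplicative $1+O(W^{-3/2+\epsilon})$ correction, and argue that the tails are negligible — is essentially the same as the paper's, which writes $\|(K-\widetilde K)g_\alpha\|^2$ as a triple integral, restricts to the polytope $|x-r|,|r-y|\le W^{-1+\epsilon/3}$, $|x|,|r|\le W^{-1/2+\epsilon/3}$, and bounds $|(K-\widetilde K)(r,y)|\le CW^{-3/2+\epsilon}|\widetilde K(r,y)|$ pointwise there. Your steps (ii) and (iii) are fine and match the paper's reasoning.

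Step (iv), however, has a genuine gap. You bound the $y$-Gaussian integral by $O(|\mu|)$ \emph{uniformly in $x$} and then try to recover the $x$-tail from integrating the ``squared $x$-factors'' $|e^{-U(x)/2}|^2$, $|e^{-U''(0)x^2/4}|^2$ over $|x|>\eta$. But U3) only gives $\Re U(x)\ge \min(1,|x|^2)/C$, and at the scale $\eta=W^{-1/2+\epsilon'}$ one has $\eta^2=W^{-1+2\epsilon'}\to 0$, so $e^{-\Re U(x)}$ is essentially $1$ there; the integral $\int_{|x|>\eta}e^{-\Re U(x)}\,dx$ is $O(1)$, not $O(e^{-cW^{2\epsilon'}})$ nor even a power of $W^{-1}$ (indeed U3) does not even force $\Re U$ to grow at infinity, so this integral need not be small at all). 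The actual $x$-decay comes from the $y$-convolution itself: completing the square gives
\[
\Big|\int e^{-W^2\zeta^2(x-y)^2-(\alpha+U''(0)/4)\,y^2}\,dy\Big|
\ \lesssim\ |\mu|\,e^{-c\,\alpha\, x^2}
\]
with $c>0$, i.e.\ the Gaussian rate in $x$ is $\sim\alpha\sim W$, not $O(1)$. This is exactly the factor your $O(|\mu|)$ bound throws away, and without it the $x$-tail estimate fails. The fix is straightforward — retain the $x$-dependence of the $y$-integral and integrate $e^{-2c\alpha x^2}$ over $|x|>\eta$, which does give the super-polynomial saving $e^{-cW^{2\epsilon'}}$ — but as written the step is incorrect. (The paper sidesteps this automatically, since in its polytope formulation the constraint $|r-y|\le W^{-1+\epsilon/3}$ together with $|r|\le W^{-1/2+\epsilon/3}$ already confines $y$ near the origin, and outside the polytope one of the kernel Gaussians or $g_\alpha$ is manifestly tiny.)
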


\begin{lemma}\label{l:2}
In the setting of the Main Proposition, let $u \in L_2$ be a function of unit norm.
Then there exists $\delta > 0$ so that
\[ \| Ku\| \leq |\mu| \, \left(1
+ O(W^{-1-\delta})\right)~. \]
{Moreover, if} $u \perp g_\alpha$, then
\[ \| Ku\| \leq |\mu| \, \left(1 - \sqrt{\frac{|U''(0)|}{2}} \frac{\Re \zeta^2}{W}
+ O(W^{-1-\delta})\right)~.\]
The same estimates hold for $\|K^* u\|$.
\end{lemma}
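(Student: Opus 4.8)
The plan is to base everything on the self-adjoint operator $K^*K$, for which Schur's bound \eqref{eq:schur} becomes essentially sharp once one exploits the analyticity of $U$ --- in contrast to $K$ itself, where a direct application of \eqref{eq:schur} loses a factor $(\Re\zeta^2)^{-1/2}$ that cannot be afforded. Writing $(K^*K)(x,y)=\int\overline{K(z,x)}\,K(z,y)\,dz$ and completing the square in $z$, the quadratic form in the exponent splits as $\bar\zeta^2(z-x)^2+\zeta^2(z-y)^2=2\Re\zeta^2\,(z-z^*)^2+\tfrac{(x-y)^2}{2\Re\zeta^2}$ with $z^*=(\bar\zeta^2x+\zeta^2y)/(2\Re\zeta^2)$; since $|\Im z^*|=O(|x-y|/\Re\zeta^2)$, for $|x-y|\lesssim W^{-1}$ one may shift the remaining contour onto $\{\Im z=\Im z^*\}$ --- this is where U4) is used, exactly as in the proof of Lemma~\ref{l:1} --- and obtain, wherever the kernel is not super-exponentially small,
\[
|(K^*K)(x,y)|\le(1+o(1))\sqrt{\tfrac{\pi}{2W^2\Re\zeta^2}}\;e^{-\frac{W^2(x-y)^2}{2\Re\zeta^2}}\;e^{-\frac12\Re U(x)-\frac12\Re U(y)-\Re U(\frac{x+y}{2})}~.
\]
Here the $(\Re\zeta^2)^{-1/2}$ of the prefactor is cancelled by the width of the Gaussian in $x-y$, so that $\sup_x\int|(K^*K)(x,y)|\,dy\le(1+o(1))\tfrac{\pi}{W^2}\sup_xe^{-2\Re U(x)}$; recalling that $\tfrac{\pi}{W^2}=|\mu|^2(1+c_0/W+O(W^{-2}))$ with $c_0=\sqrt{|U''(0)|/2}\,\Re\zeta^2$, this is $|\mu|^2(1+c_0/W+o(1))$.

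I would then localise. Fix a partition of unity $\chi_1^2+\chi_2^2=1$ with $\chi_1\equiv1$ on $\{|x|\le\eta\}$ and $\chi_1\equiv0$ on $\{|x|\ge2\eta\}$, where $\eta=W^{-\beta}$ with $\beta\in(0,\tfrac12)$ to be fixed at the end, so $\|\chi_j'\|_\infty=O(\eta^{-1})$. Applying \eqref{eq:helff} to $T=K^*K$ gives, for a unit vector $u$, $\langle K^*Ku,u\rangle=\|K\chi_1u\|^2+\|K\chi_2u\|^2+\sum_j\langle R_ju,u\rangle$ with $R_j$ of kernel $\tfrac12(\chi_j(x)-\chi_j(y))^2(K^*K)(x,y)$; bounding $(\chi_j(x)-\chi_j(y))^2\le\|\chi_j'\|_\infty^2(x-y)^2$ and combining the estimate above with $\int(x-y)^2e^{-cW^2(x-y)^2}\,dy=O(W^{-3})$, Schur's bound yields $\|R_j\|=O(\eta^{-2}W^{-4})=|\mu|^2\,O(\eta^{-2}W^{-2})$, of the admissible size $|\mu|^2O(W^{-1-\delta})$ once $\eta^{-2}W^{-2}=O(W^{-1-\delta})$. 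Since $\chi_2u$ lives where $\Re U\ge\eta^2/C$ by U3), the kernel estimate gives $\|K\chi_2u\|^2=\langle K^*K(\chi_2u),\chi_2u\rangle\le(1+o(1))\tfrac{\pi}{W^2}e^{-2\eta^2/C}\|\chi_2u\|^2\le|\mu|^2\bigl(1-\tfrac{2c_0}{W}+O(W^{-1-\delta})\bigr)\|\chi_2u\|^2$, the last step because $e^{-2\eta^2/C}(1+c_0/W)\le1-\tfrac{2c_0}{W}$ for large $W$, using $\eta^2=W^{-2\beta}\gg W^{-1}$.

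For the near block, on $\{|x|,|y|\le2\eta\}$ one has $U(x)-\tfrac{U''(0)}{2}x^2=O(\eta^3)$, and the same contour computation gives $(K^*K)(x,y)=(\K^*\K)(x,y)(1+O(\eta^3))+O(e^{-cW^2\eta^2})$ with $\K$ the harmonic approximation \eqref{Ktilde}; hence $\chi_1K^*K\chi_1=\chi_1\K^*\K\chi_1+\mathrm{Err}$ with $\|\mathrm{Err}\|=O(\eta^3)\,\|\K^*\K\|_{\mathrm{Schur}}=O(\eta^3)\,|\mu|^2(1+o(1))$, so $\|K\chi_1u\|^2\le(1+O(\eta^3))\|\K\chi_1u\|^2+O(\eta^3)|\mu|^2$. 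Moreover $|\langle\chi_1u,g_\alpha\rangle|\le\|(\chi_1-1)g_\alpha\|\le e^{-cW\eta^2}$ by the Gaussian decay of $g_\alpha$ and $\alpha\asymp W$, so $\chi_1u$ is $e^{-cW\eta^2}$-close to $g_\alpha^\perp$; and by Lemma~\ref{l:singharm} the two largest singular values of $\K$ are $|\mu|(1+O(W^{-2}))$ and $|\mu|(1-c_0/W+O(W^{-2}))$, while the top eigenfunction of $\K^*\K$ is $O(W^{-2})$-close to $g_\alpha$. Diagonalising $\K^*\K$ thus gives $\|\K v\|\le|\mu|(1-c_0/W+O(W^{-2}))\|v\|$ for every $v\perp g_\alpha$, whence $\|K\chi_1u\|^2\le|\mu|^2(1-\tfrac{2c_0}{W}+O(W^{-2}))\|\chi_1u\|^2+O(\eta^3+e^{-cW\eta^2})|\mu|^2$. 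Adding the three contributions and using $\|\chi_1u\|^2+\|\chi_2u\|^2=1$,
\[
\langle K^*Ku,u\rangle\le|\mu|^2\Bigl(1-\tfrac{2c_0}{W}+O(\eta^3)+O(\eta^{-2}W^{-2})\Bigr)~;
\]
choosing $\eta=W^{-2/5}$ (so $\beta=\tfrac25<\tfrac12$) makes both error terms $W^{-6/5}$, and taking square roots gives the asserted estimate for $u\perp g_\alpha$ with $\delta=\tfrac15$ --- more precisely $\min(\tfrac15,\delta_0(\gamma))$, the $\gamma$-dependent $\delta_0(\gamma)>0$ arising from the uniformity of the kernel asymptotics of the first paragraph.

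For a general unit $u$, decompose $u=p\,g_\alpha+u^\perp$ with $u^\perp\perp g_\alpha$: Lemma~\ref{l:1.5} together with Lemma~\ref{l:singharm} gives $Kg_\alpha=\mu g_\alpha+O(W^{-3/2+\epsilon}|\mu|)$ and the same for $K^*$, so $K^*Kg_\alpha=|\mu|^2g_\alpha+O(W^{-3/2+\epsilon}|\mu|^2)$ (a crude use of \eqref{eq:schur} gives $\|K\|=O(|\mu|)$ here), and therefore $|\langle K^*Kg_\alpha,u^\perp\rangle|=O(W^{-3/2+\epsilon}|\mu|^2)$ by orthogonality; expanding $\|Ku\|^2=|p|^2\|Kg_\alpha\|^2+2\Re(\bar p\langle K^*Kg_\alpha,u^\perp\rangle)+\|Ku^\perp\|^2$ and inserting the bound on $\|Ku^\perp\|$ just obtained yields $\|Ku\|\le|\mu|(1+O(W^{-1-\delta}))$. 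Finally $K^*$ has a kernel of the form \eqref{eq:defK} with $\zeta^2$ replaced by $\bar\zeta^2$ and $U(z)$ by $\overline{U(\bar z)}$, which still satisfies U1)--U4) with the \emph{same} $\alpha$, $|\mu|$ and $c_0$, so the two estimates hold for $K^*$ by the same argument. The hard part of the whole scheme is the uniform control of the shifted integral $\int e^{-2W^2\Re\zeta^2(t-\Re z^*)^2-\Re U(t+i\Im z^*)}\,dt$: its correction relative to the plain Gaussian value must be $o(1)$ multiplicatively wherever $\Re U$ is not already large enough to make the contribution negligible by itself, and this is exactly what the regularity exponent $\gamma>1$ in U4) secures --- and what pins down the admissible $\delta$.
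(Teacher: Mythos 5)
Your proof is correct and follows essentially the same strategy as the paper: pass to $K^*K$ (since a direct Schur bound on $K$ loses the factor $(\Re\zeta^2)^{-1/2}$), derive kernel estimates on $(K^*K)(x,y)$ by contour deformation using U4), and combine them with the partition-of-unity decomposition \eqref{eq:helff}, treating the near block via the spectral gap of the harmonic approximation $\K$ (Lemma~\ref{l:singharm}), the far block via the growth of $\Re U$ (U3), and the commutator remainder via the derivative bound on $\chi_j$. The paper's proof packages the kernel estimates into a separate Lemma~\ref{l:K*K} and uses the parametrisation $\eta=W^{-1/2+\delta}$, but with the optimal choice $\delta=1/10$ this coincides with your $\eta=W^{-2/5}$ and yields the same exponent $1/5$ (modulo the $\gamma$-dependent constraint you correctly flag).
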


\begin{proof}[Proof of Main Proposition]
The estimate on $A$ follows from Lemma~\ref{l:1}:
\begin{equation}\label{eq:l1}
\mu A
= \langle K g_\alpha, g_\alpha \rangle = \sqrt{\frac{2\alpha}{\pi}} I(\alpha)
= (1+O(W^{-3/2+\epsilon})) \mu~.
\end{equation}
The estimate on $B$ and $C$ follows from Lemma~\ref{l:1.5}  {and \ref{l:singharm}}
\[\begin{split} \left\| \mu C \right\|
&= { \| P K g_\alpha\|= }\inf_{w \in \mathbb{C}} \| (K - w) g_\alpha\| \\
&\leq \| (K - \widetilde{K}) g_\alpha \| + \| \widetilde{K} g_\alpha - \mu  g_\alpha\|
= O(W^{-3/2 + \epsilon}|\mu|)~.\end{split}\]
{In a similar way  $\|\mu B\|=\|P K^{*}g_{\alpha }\|\leq \| (K^{*} - \bar\mu ) g_\alpha\|=
 \| (K- \mu ) g_\alpha\| $ since $g_{\alpha }$ is real. Therefore the arguments for $C$ apply.}
Finally, the bound on $\|D\|$ follows from the second statement of Lemma~\ref{l:2}, since
\[ \| PKP \| \leq \sup_{u \perp g_\alpha, \|u \| = 1} \| K u \|~.\]
\end{proof}

Now we turn to the proofs of the lemmata.

\begin{proof}[Proof of Lemma~\ref{l:1}]
Changing variables
\[ y \gets \frac{y+x}{\sqrt{2}}~, \quad x \gets  \frac{y-x}{\sqrt{2}}~, \]
we obtain:
\[ I(\alpha) = \iint dx dy \exp \left\{ - (2W^2 \zeta^2 + \alpha) x^2 - \alpha y^2 -
\frac{1}{2} U(\frac{y+x}{\sqrt{2}}) - \frac{1}{2}U(\frac{y-x}{\sqrt{2}})\right\}~.\]
The integration over the complement of the rectangle defined by the inequalities
$|x| \leq W^{-1+\epsilon/3}$, $|y| \leq W^{-1/2 + \epsilon/3}$ is exponentially suppressed
according to U3), where we took into account that $\alpha$ is of order $W$.
Inside the rectangle   we expand about $y/\sqrt{2}$ (since $x$ is typically
smaller in absolute value):
\[U(\frac{y\pm x}{\sqrt{2}}) = U(\frac{y}{\sqrt{2}}) \pm U'(\frac{y}{\sqrt{2}})  \frac{x}{\sqrt{2}}+ O(x^2)~,\]
whence  by U1)
\[ \frac{1}{2} \left( U(\frac{y+x}{\sqrt{2}}) + U(\frac{y-x}{\sqrt{2}}) \right)
	= U(\frac{y}{\sqrt{2}}) + O(x^2) = \frac{U''(0)}{4} y^2 + O(x^2 + |y|^3)~.\]
Therefore
\[ I(\alpha) = (1+O(W^{-3/2+\epsilon})) \sqrt{\frac{\pi}{2W^2\zeta^2 + \alpha}}
\sqrt{\frac{\pi}{\alpha + \frac{U''(0)}{4}}}~.\]
We have:
\[ (2W^2 \zeta^2 + \alpha)(\alpha + \frac{U''(0)}{4}) = 2W^2 \zeta^2 \alpha + \alpha^2 +
W^2 \zeta^2 \frac{U''(0)}{2} + O(W)~, \]
whereas
\[ 2\alpha (W^2 \zeta^2 + \alpha) =
2W^2 \zeta^2 \alpha + 2\alpha^2~.\]
Under the assumption (\ref{eq:alphacond}) on $\alpha$, the two expressions differ by $O(W^{3/2})$.
\end{proof}

\begin{proof}[Proof of Lemma~\ref{l:1.5}]
We start with the identity
\[ \| (K-\widetilde{K}) g_\alpha \|^2 = \int_{-\infty}^\infty \!\!dr \int_{-\infty}^\infty \!\!dx
\int_{-\infty}^\infty \!\!dy
\, (K^*-\widetilde{K}^*) (x, r)(K-\widetilde{K}) (r, y) g_\alpha(x) g_\alpha(y)~.\]
This time the integral over the complement of the polytope
\[ |x - r|, |r - y| \leq W^{-1 + \epsilon/3}~, \, |x|, |r| \leq W^{-1/2 + \epsilon/3}\]
is exponentially suppressed, whereas inside the polytope
\[ |(K^*-\widetilde{K}^*) (x, r)| \leq C W^{-3/2+\epsilon} |\widetilde{K}^* (x, r)|\]
and
\[ |(K-\widetilde{K}) (r, y)| \leq C W^{-3/2+\epsilon} |\widetilde{K} (r, y)|~;\]
the statement follows  from these inequalities.
\end{proof}

\vspace{1mm}\noindent
To prove  Lemma~\ref{l:2}, we need several estimates on the kernel of $K^*K$, which are collected
in the next lemma. We shall apply the first estimate when  $|x|,|y| \leq W^{-1/2+\delta}$, the second one
when either $\Re U \geq W^\eta$ or $|x-y| \geq W^{-1+\eta}$ (for a small $\eta > 0$ to be chosen later),
and the third one in the remaining range of parameters.

\begin{lemma}\label{l:K*K} The kernel $(K^*K) (x,y)$ satisfies the following estimates.
\begin{enumerate}
\item For $|x|, |y| \leq c_{0}$ (where $c_{0}>0$ 
may depend on $\zeta$ and on the width of the strip in which $U$ is analytic),
\[ (K^*K)(x, y) = \big[ 1 + O(|x|^3 + |y|^3 + W^{-3+\epsilon})\big] (\K^*\K)(x, y)~,\]
where  $\epsilon > 0$ is an arbitrary positive number,   and 
$\widetilde{K}$ was defined in \eqref{Ktilde}.
\item For any $x, y$,
\[ 
| (K^*K)(x, y)| \leq \sqrt{\frac{\pi}{2W^2 \Re \zeta^2}}
	\exp\left\{ - \frac{\Re U(x)+\Re U(y)}{2}  - \frac{W^2}{2} \Re \zeta^2 (x-y)^2 \right\}~.
\]
\item Let $\gamma $ be the parameter appearing in U4). If $|x-y| \leq W^{-1+ \eta}$,  
$\Re U(x) \leq W^{\eta}$ and $\Re U(y) \leq W^{\eta}$, 
where $\eta>0$  is sufficiently small,  then we have  
\begin{multline*} (K^*K)(x, y) = (1+O(W^{-1+5\eta\gamma})) \sqrt{\frac{\pi}{2W^2\Re\zeta^2}} \\
\times \exp\left\{
- \frac{\bar{U}(x) + U(y)}{2} - \Re U(\frac{x+y}{2}) - \frac{W^2}{2 \Re\zeta^2} (x-y)^2 \right\}~.
\end{multline*}
\end{enumerate}
\end{lemma}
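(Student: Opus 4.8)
\textbf{Proof proposal for Lemma~\ref{l:K*K}.}

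The plan is to compute the kernel $(K^*K)(x,y)$ by performing the single Gaussian-type integral over the intermediate variable and then to extract the three regimes. Writing out the definition,
\[
(K^*K)(x,y) = \int_{-\infty}^\infty dr \, \exp\Big\{ -W^2\bar\zeta^2(x-r)^2 - W^2\zeta^2(r-y)^2 - \tfrac12\bar U(x) - \Re U(r) - \tfrac12 U(y)\Big\}~,
\]
so the $r$-integral has a quadratic-in-$r$ piece coming from the two transfer factors, with coefficient $W^2(\bar\zeta^2 + \zeta^2) = 2W^2\Re\zeta^2$ on $r^2$, plus the non-Gaussian weight $\exp(-\Re U(r))$. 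The first step is to record the exact Gaussian integral when $\Re U(r)$ is replaced by its harmonic value $\tfrac{U''(0)}{2}r^2$ (or rather $\Re$ of it after accounting for the phase): this reproduces the harmonic kernel $(\widetilde K^*\widetilde K)(x,y)$ whose explicit form is already available from the computation leading to \eqref{eq:K*K}. Part (1) then amounts to a Laplace/saddle-point comparison: inside $|x|,|y|\le c_0$ the effective Gaussian in $r$ is concentrated at scale $W^{-1}$ around a point of size $O(|x|+|y|)$, so Taylor expanding $U(r) = \tfrac{U''(0)}{2}r^2 + O(|r|^3)$ via U1) and U4), and controlling the tail of the $r$-integral by U3), yields the multiplicative error $1 + O(|x|^3+|y|^3+W^{-3+\epsilon})$; the $W^{-3+\epsilon}$ term is the price of cutting the integral at $|r|\lesssim W^{-1+\epsilon/3}$ and of the cubic correction integrated against the width-$W^{-1}$ Gaussian.

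For part (2), the idea is to bound the modulus of the integrand pointwise before integrating: $|K(x,r)| = \exp\{-W^2\Re\zeta^2(x-r)^2 - \tfrac12\Re U(x) - \tfrac12\Re U(r)\}$, and similarly for $K(r,y)$, so
\[
|(K^*K)(x,y)| \le e^{-\frac{\Re U(x)+\Re U(y)}{2}} \int dr\, e^{-\Re U(r)} e^{-W^2\Re\zeta^2[(x-r)^2+(r-y)^2]}~,
\]
then drop $e^{-\Re U(r)}\le 1$ and use the elementary identity $(x-r)^2+(r-y)^2 = 2(r-\tfrac{x+y}{2})^2 + \tfrac12(x-y)^2$ to do the remaining Gaussian in $r$ exactly, producing the factor $\sqrt{\pi/(2W^2\Re\zeta^2)}\,\exp\{-\tfrac{W^2}{2}\Re\zeta^2(x-y)^2\}$. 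This is a clean soft bound with no expansion needed.

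Part (3) is the delicate one and I expect it to be the main obstacle. Here one cannot afford to throw away $e^{-\Re U(r)}$ nor to replace $U$ by its quadratic approximation, because $\Re U(x),\Re U(y)$ may be as large as $W^\eta$ and $x,y$ need not be small. Instead the plan is: keep the full weight $\exp\{-W^2\bar\zeta^2(x-r)^2 - W^2\zeta^2(r-y)^2 - \Re U(r)\}$ and perform a genuine saddle-point analysis in $r$. The quadratic part forces $r$ to lie within $O(W^{-1+\eta'})$ of $\tfrac{x+y}{2}$ (using $|x-y|\le W^{-1+\eta}$ to see that the ``classical'' value is close to $\tfrac{x+y}{2}$); on that window one Taylor-expands $\Re U(r)$ around $r=\tfrac{x+y}{2}$ to first order, the linear term shifting the Gaussian center by $O(W^{-2}\cdot |U'|)$, which by the regularity bound U4) is $O(W^{-2}(\max(1,\Re U))^\gamma) = O(W^{-2+\eta\gamma})$, and the quadratic remainder of $U$ contributes a relative error $O(W^{-1}(\max(1,\Re U))^\gamma)$-type terms; collecting powers of $\eta,\gamma$ gives the stated $1+O(W^{-1+5\eta\gamma})$. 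The honest bookkeeping of how many factors of $W^\eta$ and of $\Re U(\cdot)^\gamma$ enter — from the shift of the saddle, from the curvature of $U$, from extending the truncated integral back to all of $\mathbb R$, and from comparing $\Re U(r)$ at the saddle with $\Re U(\tfrac{x+y}{2})$ — is where the constant $5$ in the exponent comes from, and getting this exponent to be a genuine power saving (i.e.\ choosing $\eta$ small enough that $5\eta\gamma<1$) while still covering the whole complementary range of parameters is the crux of the lemma. I would carry out (3) by first writing $r = \tfrac{x+y}{2} + s$, expanding, and then estimating each of the four error sources separately against the Gaussian in $s$ of width $W^{-1}$, using U4) uniformly to bound all derivatives of $U$ that appear.
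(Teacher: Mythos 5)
Your computations for items (1) and (2) follow the same route as the paper: factor out the $x,y$-dependent exponential, complete the square in the $r$-integral, compare with the harmonic kernel (item 1), or take absolute values and use $(x-r)^2+(r-y)^2=2(r-\tfrac{x+y}{2})^2+\tfrac12(x-y)^2$ (item 2). Those are fine.

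For item (3) there are two genuine gaps. First, once $\Im\zeta^2\neq 0$ the completed-square centre $r_0 = \tfrac{x+y}{2}+\tfrac{\Im\zeta^2}{\Re\zeta^2}\tfrac{y-x}{2}\,i$ is a \emph{complex} point at distance $\sim W^{-1+\eta}$ from the real axis, which is $W^\eta$ Gaussian widths off the contour; you cannot avoid deforming the contour of integration through (a neighbourhood of) $r_0$. But $\Re U(r)$ is not a holomorphic function of $r$, so ``Taylor-expanding $\Re U$'' in a complex neighbourhood of $\tfrac{x+y}{2}$ is not well-defined as you state it. The paper's fix is to replace $\Re U$ on the real line by its Schwarz-reflection analytic extension $U_{\mathrm{Schw}}(z)=\tfrac12\bigl(U(z)+\overline{U(\bar z)}\bigr)$ before deforming; you need this (or something equivalent) and your proposal omits it. Second, to control $U_{\mathrm{Schw}}$ along the deformed contour you need a bound on $U'$ on a complex disc of radius $\sim W^{-1+\eta+2\eta\gamma}$ around $x$, but U4) only gives $|U'(z)|\le C(\max(1,\Re U(z)))^\gamma$, and the hypotheses control $\Re U$ only at the real points $x$ and $y$, \emph{not} at $\tfrac{x+y}{2}$ nor at nearby complex points. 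Your error estimate ``$O(W^{-2}|U'|)=O(W^{-2+\eta\gamma})$'' silently assumes $\Re U\lesssim W^\eta$ propagates, which is precisely what must be proved. The paper establishes this by a bootstrap/continuity argument on the quantity $\sup\{|U'(r)|:|r-x|\le 2\xi\}$, deriving a contradiction from U4) if it were to exceed $2CW^{\eta_1\gamma}$; the exponent $5\eta\gamma$ arises from the interplay between the contour width $\xi=W^{-1+\eta+2\eta\gamma}$ and the auxiliary exponent $\eta_1\in(\eta,2\eta)$, not from the bookkeeping you sketch. Without the Schwarz-extension step and the bootstrap, the proposal for item (3) does not close.
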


We postpone the proof of Lemma~\ref{l:K*K} and start with
\begin{proof}[Proof  of Lemma~\ref{l:2}]
Let $\delta > 0$ be a small number. Construct a partition of unity $\chi_1^2 + \chi_2^2 = 1$. We choose
\[ \chi_1, \chi_2: \mathbb{R} \to \mathbb{R}_+ \]
such that
\begin{enumerate}
 \item $\chi_1$ is supported on $[ -W^{-1/2+\delta}, W^{-1/2+\delta}]$ and is identically equal to  one in
$[-\frac{1}{2}W^{-1/2+\delta}, \frac{1}{2}W^{-1/2+\delta}]$;
 \item $\chi_2$ is supported outside $[-\frac{1}{2}W^{-1/2+\delta}, \frac{1}{2}W^{-1/2+\delta}]$
and is identically equal to one outside $[ -W^{-1/2+\delta}, W^{-1/2+\delta}]$;
  \item the two functions are differentiable, and
$|\chi_1'|, |\chi_2'| \leq C W^{1/2 - \delta}$.
\end{enumerate}

Also denote $\ind_1 = \ind_{\chi_1 > 0}$, $\ind_2 = \ind_{\chi_2 > 0}$; then $\chi_1 \ind_1 = \chi_1$ and
$\chi_2 \ind_2 = \chi_2$.  According to the decomposition (\ref{eq:helff}),
\begin{equation}\label{eq:identity}
 \|K u \|^2 = \langle K^*Ku, u\rangle
= \sum_{j=1}^2 \langle \chi_j K^* K \chi_j u, u \rangle +
\sum_{j=1}^2 \langle R_j u, u \rangle~.
\end{equation}
Let $\s_0 \geq \s_1 \geq \cdots$ be the singular values of $\K$, and let $\g$
be the top eigenfunction of $\K^*\K$. From the properties of the harmonic oscillator
collected in Lemma~\ref{l:singharm},
\begin{equation}\label{smurel}
 \s_0 = |\mu| (1 + O(W^{-2}))~,
\quad \frac{\s_1}{\s_0} =  1 - \sqrt{\frac{|U''(0)|}{2}} \frac{\Re \zeta^2}{W} + O(W^{-2})~.
\end{equation}
We shall prove the following estimates:
\begin{align}\label{eq:t1}
\langle \chi_1 K^* K \chi_1 u, u \rangle
&\leq \s_1^2 \|\chi_1 u\|^2 + C \s_0^2 W^{-3/2+3\delta} \|u\|^2 \quad (u \perp g_\alpha)~, \\
\label{eq:t1*}
\langle \chi_1 K^* K \chi_1 u, u \rangle &\leq \s_0^2 (1 + CW^{-3/2+3\delta}) \|\chi_1 u\|^2~, \\
\label{eq:t2}
|\langle \chi_2 K^* K \chi_2 u, u \rangle | &\leq \s_0^2 (1 - \frac{1}{C} W^{-1+2\delta}) \|\chi_2 u\|^2~, \\
\label{eq:t3}
|\langle  R_j u, u \rangle | &\leq CW^{-1-2\delta} \s_0^2\|u\|^2~.
\end{align}
Once these bounds are established, the proof of the lemma is concluded as follows.
For $u \perp g_\alpha$, we use (\ref{eq:t1}), (\ref{eq:t2}), (\ref{eq:t3}) to estimate the addends in
(\ref{eq:identity}); then from the inequality
\[ 
\s_0^2 (1 - \frac{1}{C W^{1-2\delta }}) \leq 
\s_0^2 (1 - 2\sqrt{\tfrac{|U''(0)|}{2}} \tfrac{\Re \zeta^2}{W} + O(W^{-2}))= \s_1^2 
\]
and the identity
\[ 
\|\chi_1 u\|^2 + \|\chi_2 u\|^2 = \|u\|^2 
\]
we obtain:
\[ 
|\langle K^* K u, u \rangle | \leq (1+O(W^{-1-2\delta}))\s_1^2 \|u\|^2~.
\]
For arbitrary $u$ we apply  (\ref{eq:t1*}) in place of (\ref{eq:t1}), and obtain:
\[ 
|\langle K^* K u, u \rangle | \leq (1+O(W^{-1-2\delta}))\s_0^2 \|u\|^2~.
\]
Inserting now  \eqref{smurel}, the proof is concluded. 

\vspace{2mm}\noindent
{\bf Proof of (\ref{eq:t1})} starts from
\begin{equation}\label{eq:tmp}\begin{split}
&\langle \chi_1 K^* K \chi_1 u, u \rangle \\
&\qquad= \langle \K^* \K \chi_1 u, \chi_1 u \rangle
 + \langle (\ind_1 K^* K \ind_1 - \ind_1 \K^* \K \ind_1) \chi_1 u, \chi_1 u \rangle~. \end{split}\end{equation}
From the decomposition
\[ \chi_1 u = \langle \chi_1 u, \, \g \rangle \g + \big\{ \chi_1 u - \langle \chi_1 u, \, \g \rangle \g \big\} \]
we obtain
\[\langle \K^* \K \chi_1 u, \chi_1 u \rangle  \leq \s_0^2 |\langle \chi_1 u, \g \rangle|^2
	+ \s_1^2 \|\chi_1 u \|^2~.\]
If $u \perp g_\alpha$,
\begin{align*} 
|\langle \chi_1 u, \g \rangle |& \leq \|\chi_1u\| \|\g - g_\alpha\| +
|\langle u - \chi_1 u,  g_\alpha\rangle| \\
&{\leq  \|u\| \|\g - g_\alpha\|+  \|u\| \|(1-\chi_{1}) g_\alpha\|}
 \leq C_{1}W^{-2} \|u\|~, 
\end{align*} 
where in the first term of the sum we applied 
Lemma~\ref{l:singharm}.
{Finally}
\begin{equation}
\label{eq:nearsaddle}
\|\ind_1 K^* K \ind_1 - \ind_1 \K^* \K \ind_1 \|
\leq C_{2} W^{-3/2 + 3\delta} \s_0^2
\end{equation}
according to item 1.\ of Lemma~\ref{l:K*K} and Schur's bound (\ref{eq:schur}).
Hence
\begin{align}\label{eq:t1a}
 \langle \chi_1 K^* K \chi_1 u, u \rangle 
&\leq C_{1}^{2} W^{-4} \|u\|^{2} + \s_1^2 \|\chi_1 u \|^2 + C_{2} \s_0^2 
W^{-3/2+3\delta} \|u\|^2 \\
& \leq \s_1^2 \|\chi_1 u\|^2 + C \s_0^2 W^{-3/2+3\delta} \|u\|^2~.
\end{align}
 
\vspace{1mm}\noindent {\bf Proof of (\ref{eq:t1*})} also starts from (\ref{eq:tmp}). From
\[ \langle \chi_1 \K^* \K \chi_1u, u \rangle \leq \s_0^2 \|\chi_1 u \|^2\]
and (\ref{eq:nearsaddle}),
we obtain the bound
\begin{equation}\label{eq:t1*a}
\langle \chi_1 K^* K \chi_1 u, u \rangle \leq \s_0^2 (1 + CW^{-3/2+3\delta}) \|\chi_1 u\|^2~.
\end{equation}

\vspace{1mm}\noindent {\bf Proof of (\ref{eq:t2})}  
We plug  into Schur's bound (\ref{eq:schur}) the estimates on the kernel
of $(K^* K) (x,y)$ obtained in Lemma~\ref{l:K*K},  as follows. Set 
$\eta = \delta/ (5\gamma)$. We use the estimate given in  item 2 when  either
$|x-y|>W^{-1+\eta }$, or $\Re U (x)>W^{\eta }$, or $\Re U (y)>W^{\eta }$. In the complementary region 
\[ |x-y|\leq W^{-1+\eta} \quad \wedge \quad \Re U (x)\leq W^{\eta } \quad \wedge  \quad \Re U (y)\leq W^{\eta }\]
we use  the estimate given in  item 3.  Then 
\begin{align*}
\|\ind_2 K^* K \ind_2 \| &\leq \ 
\tfrac{\pi }{W^{2}}e^{-\frac{W^{-1+2\delta }}{C}} (1+O (W^{-1+ 5 \eta \gamma})) 
+ e^{-C_{1} W^{\eta }}\\
&\leq   
\s_0^2 (1 - \frac{1}{C} W^{-1+2\delta})~, 
\end{align*}
where in the first term we used $\Re U (x)\geq \frac{|x|^{2}}{C}\geq\frac{W^{-1+2\delta }}{C} $ (from U3) and
the definition of $\chi_{2}$) and 
the relations between $|\mu |= \frac{\sqrt{\pi}}{W} (1+O (W^{-1}))$ and $\tilde{s}_{0}$ given 
in \eqref{smurel}.
The second term comes from the estimate in item 2 of Lemma~\ref{l:K*K}. Hence
\begin{equation}\label{eq:t2a}
|\langle \chi_2 K^* K \chi_2 u, u \rangle | \leq \s_0^2 (1 - \frac{1}{C} W^{-1+2\delta}) \|u\|^2~.
\end{equation}

\vspace{1mm}\noindent {\bf Proof of (\ref{eq:t3})} is similar: Schur's bound, 
{item 2 of} Lemma~\ref{l:K*K} and the
bounds
\[ |\chi_j(x) - \chi_j(y)| \leq C W^{1/2 - \delta} |x-y|\]
are used to show that
\[ \|R_j\| \leq CW^{-1-2\delta}\s_0^2~.\]
\end{proof}

\begin{proof}[Proof of Lemma~\ref{l:K*K}]
First,
\[ (K^*K)(x,y) = E(x, y) I(x, y)~, \]
where
\[ E(x, y) = \exp \left\{ - \frac{\bar{U}(x)+ U(y)}{2} - W^2 \bar{\zeta}^2 x^2
- W^2 \zeta^2 y^2\right\}~,\]
and
\[ I(x,y) = \int dr \exp \left\{ - 2W^2 \left[ \Re \zeta^2 r^2 - (\bar\zeta^2 x + \zeta^2 y) r\right]
- \Re U(r) \right\}~.\]
On the real line, $\Re U(z)$ coincides with the analytic function 
\begin{equation}\label{eq:uschw} 
U_\text{Schw}(z) = (U(z) + \overline{U(\bar{z})})/2~, 
\end{equation}
therefore we replace   $\Re U$ with
$U_\text{Schw}$.

To prove the first item of the lemma,  let $|x|,|y|<c_0$ for a small constant $c_0$, and 
let $\E$ and $\I$ be expressions analogous to $E$ and $I$ which correspond to  $\K$;
\[ \I(x, y) = \sqrt{\frac{\pi}{2W^2 \Re \zeta^2 + \frac{\Re U''(0)}{2}} }
\exp \left\{ \frac{W^4 (\bar\zeta^2 x + \zeta^2 y)^2}{2W^2\Re \zeta^2
+ \frac{\Re U''(0)}{2} } \right\}~.\]
Let   $\xi \geq 2|x| + 2|y|$ be a small number to be fixed later on in
(\ref{eq:defxi}), and set 
\[ r_{0}=\frac{W^{2} (\bar\zeta^2 x + \zeta^2 y)}{2W^2\Re \zeta^2
+ \frac{\Re U''(0)}{2} }~.\]
Then $|r_{0}|\leq \xi/4$. Deform the contour of integration to
\[ (-\infty, \Re r_0-\xi) \cup (\Re r_0-\xi, r_0 - \xi) \cup (r_0 - \xi, r_0 + \xi) 
\cup (r_0 + \xi, \Re r_0 + \xi) \cup (\Re r_0 + \xi, \infty)~.\]
Let $I = I_1 + I_2$, where $I_1$ is the integral over  $(r_0 - \xi, r_0 + \xi)$, and
$I_2$ is the integral over the remaining part of the contour. Let $\I = \I_1 + \I_2$ be the analogous
decomposition of $\I$. (Observe that, for sufficiently small $c_0$, the deformed contour is
within the domain of analyticity of $U$.)
Then
\[ |I_2|,|\I_2| \leq \exp \left\{ - C^{-1} W^2 \xi^2 \right\}~. \]
To estimate the difference between the dominant parts $I_1,\I_1$, we write
\[
 I_1 - \I_1 = \exp {\left\{ \frac{W^4 (\bar\zeta^2 x + \zeta^2 y)^2}{2W^2\Re \zeta^2
+ \frac{\Re U''(0)}{2} }\right\}} \int\limits_{r_0 - \xi}^{r_0 + \xi} dr \, 
e^{- \left[ 2W^2 \Re \zeta^2+ \frac{\Re U''(0)}{2}
\right]
 (r-r_{0})^{2}  }\left[ e^{R (r)  }-1 \right]
\]
where 
\[
R (r)= \frac{U''_{\text{Schw}}(0)}{2} r^{2}- U_{\text{Schw}}(r).
\]
We obtain:
\[ |I_1 - \I_1| = O(( |r_{0}| +\xi)^3) |\I|~, \]
To conclude the proof of the first item, set
\begin{equation}\label{eq:defxi}
\xi = 2(|x|+|y|) + W^{-1+\epsilon/3}~,
\end{equation}
and observe that
\[ E(x, y) = (1+O(|x|^3+|y|^3)) \E(x, y)~. \]

\vspace{2mm}\noindent
To prove the second item, we insert  absolute values:
\[ |E(x, y)| \leq \exp \left\{ - \frac{\Re U(x) + \Re U(y)}{2} - W^2 \Re \zeta^2 (x^2 + y^2)\right\}~,\]
and
\[\begin{split}
|I(x, y)| &\leq \exp \left\{ \frac{W^2}{2} \Re \zeta^2 (x+y)^2 \right\}\\
&\qquad\times \int dr \exp\left\{  - 2W^2 \Re \zeta^2 (r - \frac{x+y}{2})^2 - \Re U(r) \right\}\\
&\leq \sqrt{\frac{\pi}{2W^2\Re\zeta^2}} \exp \left\{ \frac{W^2}{2} \Re \zeta^2 (x+y)^2 \right\}~.
\end{split} \]

\vspace{2mm}\noindent
To prove the third item, let 
 us rewrite $I(x,y)$ as
\[ \exp \left\{ \frac{W^2}{2} \frac{(\bar\zeta^2 x + \zeta^2 y)^2}{\Re \zeta^2}\right\}
\int dr \exp\left\{ - 2W^2 \Re \zeta^2 (r-r_0)^2 - \Re U(r)\right\}~,\]
where
\[ r_0 = \frac{y+x}{2} + \frac{\Im \zeta^2}{\Re \zeta^2} \frac{y-x}{2} \, i~. \]
Then performing a contour deformation similar to the one in  the proof of the first item we have
\[
\int dr \exp\left\{ - 2W^2 \Re \zeta^2 (r-r_0)^2 - \Re U(r)\right\} = I_{1}+ I_{2}~,
\]
where
\begin{align}\label{eq:i1i2}
I_{1}&=  e^{- U_{\text{Schw}}(r_{0})} 
\int_{-\xi}^{\xi} dr e^{ - 2W^2 \Re \zeta^2 r^2  } 
e^{ U_{\text{Schw}}(r_{0})-U_{\text{Schw}}(r+r_{0}) } 
 \\
|I_{2}|&\leq  e^{-W^{2} (\xi^{2}- |\Im r_{0}|^{2})/C}\leq e^{-W^{2} \xi^{2}/C'} ~,
\end{align}
if we choose $\xi>|\Im r_{0}|/4$.  The imaginary part $ |\Im r_{0}|$ may be as 
large as $W^{-1+\eta} $, therefore
we have to take $\xi= W^{-1+\eta +\epsilon }$ for some $\epsilon > 0$.
 We later set $\epsilon = 2\eta\gamma$.

Let us show that for any $\eta_1 \in (\eta, 2\eta)$  the following estimate holds
for  $r$ in a complex neighbourhood of $x$:
\begin{equation}\label{eq:derbd}
|r - x| \leq 2 \xi= 2 W^{-1+\eta +\epsilon }\Longrightarrow \quad |U'(r)| 
< 2 C W^{\eta_1 \gamma}~.
\end{equation}
Indeed,  by the inequalities $\Re[U (x)]\leq W^{\eta }$ and U4), $U' (x)$ satisfies  
$ |U'(r)| < 2 C W^{\eta \gamma}$, and the smoothness of 
$U$ guarantees there exists some constant $c_{x}>0$ such that 
(\ref{eq:derbd}) holds inside the ball of radius $c_{x}$ centred at $x$. Let 
  $r_{1}  \in \mathbb{C}$ be a point such that  $|r_{1}-x|> c_{x}$, (\ref{eq:derbd}) holds for all 
$ |r - x|< |r_{1} - x|$ and fails at $r_{1}$.  Then by U4)
\begin{equation}\label{eq:fromU4}
\Re[U(r_{1})]  > W^{\eta_{1} }~.
\end{equation}
Performing a Taylor expansion with first order integral remainder we have
\[
 \frac{U(r_{1})-U(x)}{r_{1}-x}= \int_{0}^{1}  U'(x+t (r_{1}-x))dt.
\]
Inserting absolute values,
\[
 \frac{|U(r_{1})-U(x)|}{|r_{1}-x|}\leq  \int_{0}^{1}  |U'(x+t (r_{1}-x)|dt
\leq  2 C W^{\eta_1 \gamma}
\]
since $|x+t (r_{1}-x)|< |r_{1} - x|$ for all $0\leq t<1$. 
From  (\ref{eq:fromU4}) and the assumptions  $ \Re[U(x)] \leq  W^{\eta}\ll W^{\eta_{1}} $
and $|r_{1}-x|\leq 2W^{-1+\eta+\epsilon  }$ we get
\[
\frac14\, W^{\eta_{1}+1-\eta -\epsilon } \leq
\frac{   |\Re[U(r_{1})]- \Re[U(x)]| }{|r_{1}-x|} \leq 
\frac{|U(r_{1})-U(x)|}{|r_{1}-x|}\leq   2 C W^{\eta_1 \gamma}
\]
hence 
\begin{equation}\label{eq:contr}
 W^{1-\eta -\epsilon } \leq 8C  W^{\eta_1 (\gamma-1)}
\end{equation}
as long as $\eta_{1} (\gamma -1 )< 1-\eta -\epsilon$.  When $\eta,\epsilon > 0$ are 
sufficiently small we have  $\eta_{1} (\gamma -1 )< 1-\eta -\epsilon$ 
for all $\eta_1 \in (\eta, 2\eta)$, in contradiction with (\ref{eq:contr}).
Thus (\ref{eq:derbd}) is established. 

Applying  the definition (\ref{eq:uschw}) of $U_\text{Schw}$  we  have 
$U'_\text{Schw}(r)=(U'(r)+\overline{U'(\bar r)})/2$, and 
from (\ref{eq:derbd}) 
\[
 |r - x| \leq 2 \xi= 2 W^{-1+\eta +\epsilon }\Rightarrow \quad |U'_\text{Schw}(r)| 
< 2 C W^{\eta_1 \gamma}~,
\]
where $ |\bar{r} - x|= |r- x|$ since $x\in \mathbb{R}$. 
Now set $\epsilon =2\eta \gamma $. Then for any $\eta <\eta_{1}<2\eta $
we have $\epsilon >\eta_{1}\gamma $ and 
\begin{align*}
 |U_\text{Schw}(r_{0})-U_\text{Schw}(r)| &\leq |r-r_{0} |\int_{0}^{1}  
|U'_\text{Schw}(r+t (r_{0}-r))|dt\\
&  \leq O (W^{-1+\eta +\epsilon+\gamma \eta_{1}})\ = O (W^{-1+5\gamma\eta})~,
\end{align*}
hence  $I_1$ of (\ref{eq:i1i2}) satisfies
\[
I_{1}= e^{- \Re U( \frac{x+y}{2})} \sqrt{\frac{\pi}{2W^2 \Re \zeta^2}} \left[ 1+O (W^{-1+5\gamma\eta })  \right]~.
\]
This concludes the proof of the third item  of Lemma~\ref{l:K*K}.

\end{proof}

\section{Application to a complex statistical mechanics model}\label{s:applic}

In this section, we apply the results of the paper
to a toy model. The model is tailored so that 
the conditions U1)--U4) of the Main Proposition
will be satisfied after a rotation of the integration contour.
The choice of the potential is partly inspired by supersymmetric
models appearing in the study of random operators, but in our case 
the potential has only one minimum, instead of several minima as in the
original models.

Let $V(x) = a \log (1 + bx^2)$, where $a > 0$ and $\Re b > 0$. 
We are interested in the statistical mechanics model corresponding to the action
\[ W^2 \sum_j (\phi_j - \phi_{j+1})^2 + \sum_j V(\phi_j)~;\]
for simplicity of notation, we set the inverse temperature to one.
Without going into the details of the construction of infinite-volume measures (which 
is impeded by several obstacles, see e.g.\ Remark~\ref{rmk:bdry} below) , let us
define the ``mean'' of a local observable $F:\mathbb{R}^{n+1} \to \mathbb{C}$
as follows:
\begin{multline}\label{eq:mean}\langle F(\phi_0, \cdots, \phi_n) \rangle
= \\ \lim_{M,N \to \infty} \frac
{\int \prod_{j=-M}^N d\phi_j e^{-\sum_{j=-M}^N V(\phi_j) - W^2 \sum_{j=-M}^{N-1}
(\phi_j - \phi_{j+1})^2} F(\phi_0, \cdots, \phi_n)}
{\int \prod_{j=-M}^N d\phi_j e^{-\sum_{j=-M}^N V(\phi_j) - W^2 \sum_{j=-M}^{N-1}
(\phi_j - \phi_{j+1})^2}}~.
\end{multline}
We are interested in the long-distance correlations, e.g.\
\[ 
\langle (F(\phi_0) - \langle F(\phi_0) \rangle) (G(\phi_n) - \langle G(\phi_n) \rangle) 
\rangle~, 
\]
with  $F,G:\mathbb{R} \to \mathbb{C}$. 
Define $\zeta \in \mathbb{C}$ by
\begin{equation}\label{eq:defzeta}
|\zeta|=1~, \quad |\arg \zeta| < \pi/4, \quad \zeta^4 V''(0) \, 
\big(= 2 \zeta^4 \, ab\, \big) \,> 0~; 
\end{equation}
let $\Sigma = \mathrm{conv} \left(\mathbb{R} \cup \mathbb{R}\zeta \right)$, and set
\begin{equation}\label{eq:defU}
U(x) = V(\zeta x)~. 
\end{equation}

\vspace{2mm}\noindent
Our transfer operator method can be applied to study observables 
$F:\mathbb{R}^{n+1} \to \mathbb{C}$
which have an analytic extension to $\Sigma^{n+1}$ and do not grow too fast in this sector. For
simplicity, let us focus on $n=0$, i.e.\ on observables
depending only on one variable: assume that
\begin{enumerate}[F1)]
\item $F: \Sigma \to \mathbb{C}$ is analytic;
\item $|F(z)| \leq C (1 + |z|)^{2a  - 1 - \epsilon}$ for some $C > 0$ and $\epsilon > 0$, 
and all $z \in \Sigma$.
\end{enumerate}

\begin{thm} Suppose that $V(x) = a \log(1 + bx^2)$ for some $a > 0$ and $\Re b > 0$, and that 
$F, G$ are observables which satisfy F1)--F2).
Then
\begin{multline*}
|\langle (F(\phi_0) - \langle F(\phi_0) \rangle) (G(\phi_n) - \langle G(\phi_n) \rangle) \rangle|
	\\\leq C_F C_G \left(1 - \sqrt{\frac{|V''(0)|}{2}} \frac{\Re \zeta^2}{W} + O(W^{-1-\delta})\right)^n~,
\end{multline*}
where $\zeta$ is defined in (\ref{eq:defzeta}).
\end{thm}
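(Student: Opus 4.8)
The plan is to rewrite the finite-volume correlation function in terms of the transfer operator and then apply Corollary~\ref{cor}. First I would perform the contour rotation $\phi_j \gets \zeta\phi_j$ in both the numerator and denominator of \eqref{eq:mean}; by F1)--F2), the growth condition U3)--U4) on $U(x)=V(\zeta x)$, and a Cauchy-type argument (deforming each one-dimensional integration contour from $\mathbb R$ to $\zeta\mathbb R$ inside $\Sigma$, where the integrand is analytic and decays because $\Re\zeta^2>0$ and $\Re V(\zeta x)\to+\infty$), the rotated integrals equal the original ones up to the Jacobian factors $\zeta^{M+N+1}$ that cancel between numerator and denominator. After the rotation the weight becomes $\exp\{-\sum_j U(\phi_j) - W^2\zeta^2\sum_j(\phi_j-\phi_{j+1})^2\}$, and one checks that $U$ satisfies U1)--U4) with the stated $\zeta$: U1) is $V(0)=V'(0)=0$; U2) is $\zeta^2U''(0)=\zeta^4V''(0)=2\zeta^4ab>0$ by \eqref{eq:defzeta}; U3) follows from $\Re V(\zeta x)=a\Re\log(1+b\zeta^2x^2)\gtrsim\min(1,|x|^2)$ using $|\arg\zeta|<\pi/4$; and U4) holds with $\gamma$ slightly above $1$ since $V'(z)=2abz/(1+bz^2)$ and $\Re V$ grows logarithmically, so $|U'|$ is bounded by a small power of $\max(1,\Re U)$.

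Next I would express the rotated ratio through powers of $K$. Writing $K$ for the operator with kernel \eqref{eq:defK} and $\Omega_M=K^M\mathbf 1$ (or rather the function obtained by collapsing the left/right boundary chains), one has, with the multiplication operators $\hat F, \hat G$ by $F(\zeta\cdot)$ and $G(\zeta\cdot)$,
\[
\langle F(\phi_0)G(\phi_n)\rangle = \lim_{M,N\to\infty}\frac{\langle K^{N-n}\, \hat G\, K^{n}\, \hat F\, K^{M}\psi,\ \bar\psi\rangle}{\langle K^{N}\psi,\ \bar\psi\rangle},
\]
where $\psi,\bar\psi$ encode the (here trivial) boundary conditions; the bilinear rather than sesquilinear pairing appears because the rotated action is not real, which is why Corollary~\ref{cor} is phrased in terms of the invariant subspace $\bar u_0^\perp$ of $K$ rather than $u_0^\perp$. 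Using Corollary~\ref{cor}: $K$ has a simple top eigenvalue $\lambda_0=\mu(1+O(W^{-1-\delta}))$ with eigenfunction $u_0\approx g_\alpha$, and on the complementary invariant subspace $\|K^m\|\le|\lambda_0|^m(1-c_0/W+O(W^{-1-\delta}))^m$ with $c_0=\sqrt{|U''(0)|/2}\,\Re\zeta^2=\sqrt{|V''(0)|/2}\,\Re\zeta^2$. Decomposing each intermediate function along $\mathbb C u_0\oplus\bar u_0^\perp$ and letting $M,N\to\infty$, the denominator is $\sim\lambda_0^N\langle\psi,\bar u_0\rangle\langle u_0,\bar\psi\rangle/\|u_0\|^2$ type quantities (nonzero since $u_0\approx g_\alpha$ and $\psi$ is a Gaussian-type positive function), the leading term of the numerator factorizes as $\langle F(\phi_0)\rangle\langle G(\phi_n)\rangle$ plus corrections, and in the connected correlation the rank-one projection onto $u_0$ cancels, leaving only the action of $K^n$ restricted to $\bar u_0^\perp$ sandwiched between bounded vectors $P_{\bar u_0^\perp}(\hat G^*\cdots)$ and $P_{\bar u_0^\perp}(\hat F\cdots)$. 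This yields the bound $C_FC_G(1-c_0/W+O(W^{-1-\delta}))^n$, where $C_F,C_G$ absorb $\|\hat F g_\alpha\|$-type norms, which are finite precisely by the growth exponent $2a-1-\epsilon$ in F2) against the Gaussian decay of $g_\alpha$ and the polynomial tails of $e^{-U}$.

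The main obstacle I expect is not the spectral input — that is exactly what Corollary~\ref{cor} provides — but the rigorous handling of the infinite-volume limit together with the contour deformation in infinitely many variables: one must control the boundary vectors $K^M\psi$ uniformly in $M$ (they converge, after normalization, to a multiple of $u_0$ at geometric rate $(1-c_0/W)^M$), justify exchanging the $M,N\to\infty$ limit with the contour rotation (which is safest done at finite volume first, where everything is an absolutely convergent finite-dimensional integral, and only then passing to the limit), and verify that the relevant pairings $\langle u_0,\bar\psi\rangle$ etc.\ are nonvanishing so that the ratio is well-defined. A secondary technical point is checking F2) is the sharp integrability threshold: $|F(\zeta x)|^2 e^{-\Re U(x)}\lesssim (1+|x|)^{2(2a-1-\epsilon)}(1+|x|^2)^{-2a}$, which is integrable iff the exponent in F2) is as stated, and this is what makes the constant $C_F$ finite; a degenerate potential would spoil this, but here the single non-degenerate minimum keeps it clean.
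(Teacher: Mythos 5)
Your proposal follows essentially the same route as the paper: rotate the contour $\phi_j\gets\zeta\phi_j$ using Cauchy's theorem, set up the transfer operator $K$ with kernel \eqref{eq:defK}, apply Corollary~\ref{cor}, decompose along $\mathbb{C}u_0\oplus\bar u_0^\perp$ (noting that the pairing is bilinear, hence the appearance of $\bar u_0^\perp$ rather than $u_0^\perp$), and observe that in the connected correlation the rank-one part cancels so that only $K^n$ restricted to $\bar u_0^\perp$ survives. The paper's boundary function is concretely $B(x)=e^{-U(x)/2}$ (absorbed from the free-boundary Gibbs weight), which plays the role of your $\psi$; it also records the non-vanishing $\langle u_0,\bar u_0\rangle=1+O(W^{-\delta})$ as the key fact that legitimizes the decomposition. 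Your write-up goes further than the paper in one respect: you actually verify U1)--U4) for $U(x)=V(\zeta x)$ with the given log potential (including that U4) holds with any $\gamma>1$ because $U'$ is bounded while $\Re U$ grows), whereas the paper simply asserts this. You also flag, correctly, the need to justify exchanging contour rotation with the infinite-volume limit, which the paper handles by rotating at finite $(M,N)$ first. No gap; this is a faithful reconstruction with some added diligence.
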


\begin{proof} Let $F, G$ be observables satisfying
F1)--F2). Note that the functions inside both integrals of  (\ref{eq:mean})
satisfy $f (\phi_{-M},\dotsc ,\phi_{N})\sim e^{-W^{2}\phi_{j}^{2}}$ as $|\phi_{j}|\to\infty$
while keeping the other variables fixed. Then  using Cauchy's theorem we can rotate the 
contour of integration $\phi_{j}\to \phi_{j}\zeta$, as long as $\Re \zeta^2 > 0$. Repeating
the argument for each variable $\phi_{j}$ we obtain
\begin{equation}\label{eq:mean'}
\begin{split}
&\frac
{\int \prod_{j=-M}^N d\phi_j e^{-\sum_{j=-M}^N V(\phi_j) - W^2 \sum_{j=-M}^{N-1}
(\phi_j - \phi_{j+1})^2} F(\phi_0) G(\phi_n)}
{\int \prod_{j=-M}^N d\phi_j e^{-\sum_{j=-M}^N V(\phi_j) - W^2 \sum_{j=-M}^{N-1}
(\phi_j - \phi_{j+1})^2}} \\
&\quad=
\frac
{\int \prod_{j=-M}^N d\phi_j e^{-\sum_{j=-M}^N U(\phi_j) - W^2 \zeta^2 \sum_{j=-M}^{N-1}
(\phi_j - \phi_{j+1})^2} F(\zeta \phi_0) G(\zeta \phi_n)}
{\int \prod_{j=-M}^N d\phi_j e^{-\sum_{j=-M}^N U(\phi_j) - W^2 \zeta^2 \sum_{j=-M}^{N-1}
(\phi_j - \phi_{j+1})^2}}~.
\end{split}
\end{equation}

\begin{rmk}\label{rmk:bdry}
The argument presented above for free boundary conditions applies equally well to periodic
boundary conditions. For more general boundary 
conditions the potential at the boundary is modified, therefore additional requirements on $U$ 
and the observables
probably have to be imposed.
\end{rmk}

Now we set up an integral operator $K$, the kernel of which is given by (\ref{eq:defK});
the Main Proposition and Corollary~\ref{cor} are applicable, so the largest  eigenvalue
of $K$ (in absolute value)
$\lambda_0$ is given by (\ref{eq:lam0}). Let $u_{0}$ be a corresponding eigenfunction
satisfying the normalisation conditions \eqref{normcond}. 
Set $B(x) = \exp(-U(x)/2)$. Then the denominator of (\ref{eq:mean'}) is equal to
\[ 
\int K^{M+N}(x_1, x_2) B(x_1) B(x_2) dx_1 dx_2~. 
\]
From  Corollary~\ref{cor} we have 
\[
\langle u_0,\bar{u}_0 \rangle=\langle u_0, g_{\alpha } \rangle + 
\langle u_0,\bar{u}_0-g_{\alpha } \rangle = 1 + O (W^{-\delta })\neq 0. 
\]
Decomposing $B = \frac{\langle B,\bar{u}_0 \rangle}{\langle u_0,\bar{u}_0 \rangle }u_0 + B_1$ 
(this is well defined since $\langle u_0,\bar{u}_0 \rangle=1 + O (W^{-\delta })  \neq 0$)
and setting
$\hat{K} = \lambda_0^{-1} K$, we obtain from Corollary~\ref{cor}:
\[ 
\lim_{M,N \to \infty} \int \hat{K}^{M+N}(x_1, x_2) B(x_1) B(x_2) dx_1 dx_2
 = 
\frac{\langle B,\bar{u}_0 \rangle^{2}}{\langle u_0,\bar{u}_0 \rangle^{2} }~.
\]
Similarly, for $F: \mathbb{R} \to \mathbb{C}$ satisfying F1)--F2),
\[ 
\int \prod_{j=1}^3 dx_j B(x_1) \hat{K}^M(x_1, x_2) F(\zeta x_2)
\hat{K}^N(x_2, x_3) B(x_3)
\to 
\frac{\langle B,\bar{u}_0 \rangle^{2}}{\langle u_0,\bar{u}_0 \rangle^{2} }
\langle F_\zeta u_0, \bar{u}_0 \rangle~,
\]
where $F_\zeta(x) = F(\zeta x)$. Hence we obtain:
\begin{equation}
\langle F(\phi_0) \rangle = \langle F_\zeta u_0, \bar{u}_0 \rangle~.
\end{equation}
Similarly,
\begin{equation}
\langle F(\phi_0) G(\phi_n) \rangle = 
\langle \hat{K}^n F_\zeta u_0, \bar{G}_\zeta \bar{u}_0 \rangle~.
\end{equation}
If $\langle F(\phi_0) \rangle = 0$ (i.e.\ $ F_\zeta u_0\in \bar{u}_0^{\perp} $), 
Corollary~\ref{cor} yields:
\[ 
\| \hat{K}^n F_\zeta u_0 \| \leq \left(1 - \sqrt{\frac{|V''(0)|}{2}} \frac{\Re \zeta^2}{W} 
+ O(W^{-1-\delta})\right)^n \|F_\zeta u_0\|~,
\]
thus for any $F, G$ satisfying F1)--F2)
\begin{multline*}
|\langle (F(\phi_0) - \langle F(\phi_0) \rangle) (G(\phi_n) - \langle G(\phi_n) \rangle) \rangle|
	\\
\leq C_F C_G \left(1 - \sqrt{\frac{|V''(0)|}{2}} \frac{\Re \zeta^2}{W} 
+ O(W^{-1-\delta})\right)^n~.
\end{multline*}
\end{proof}

\paragraph{Acknowledgments} We are  grateful to Tom Spencer for encouraging 
us to pursue the transfer matrix approach, and for numerous suggestions.
We  thank  Sergey Denisov for  discussions on
non self-adjoint operators, and  Tanya Shcherbina for critical comments.

Margherita Disertori thanks the Institute for Advanced Study for 
 hospitality while some of this work was in progress, and 
acknowledges partial support from ERC Starting Grant CoMBos. 
Sasha Sodin acknowledges  partial support by NSF under grant PHY-1305472.


\begin{thebibliography}{99}

\bibitem{D1} E.~B.~Davies,
Semi-classical states for non-self-adjoint Schr\"odinger operators,
Comm.\ Math.\ Phys.\ 200 (1999), no.~1, pp.~5--41.

\bibitem{D2} E.~B.~Davies,
Non-self-adjoint differential operators,
Bulletin of the London Mathematical Society 34 (2002), no~5,  pp.~513--532.

\bibitem{DPS} M.~Disertori, H.~Pinson, and T. Spencer,
Density of states for random band matrices,
Comm.\  Math.\ Phys.~232.1 (2002): 83--124.

\bibitem{DS} R.~L.~Dobrushin, S.~B.~Shlosman,  Completely analytical interactions: constructive description. J.\ Statist.\ Phys.\ 46 (1987), no.\ 5--6, 
983–-1014.
\bibitem{H} B.~Helffer,
Semiclassical analysis, Witten Laplacians, and statistical mechanics. Vol. 1.
Series in Partial Differential Equations and Applications, 1. 
World Scientific Publishing Co., Inc., River Edge, NJ, 2002. x+179 pp.

\bibitem{H2} B.~Helffer, 
Spectral theory and its applications. 
Cambridge Studies in Advanced Mathematics, 139. 
Cambridge University Press, Cambridge, 2013

\bibitem{RH} R.~Henry,
Spectre et pseudospectre d'op\'erateurs non-autoadjoints,
PhD thesis at Universit\'e Paris Sud -- Paris XI (2013),
\url{http://tel.archives-ouvertes.fr/tel-00924425}

\bibitem{Muroya} S.~Muroya, A.~Nakamura, C.~Nonaka, T.~Takaishi,
Lattice QCD at Finite Density:
An Introductory Review,
Prog. Theor. Phys. (2003) 110 (4): 615--668.

\bibitem{Ru} D.~Ruelle,
Statistical mechanics: Rigorous results. 
W.\ A.\ Benjamin, Inc., New York-Amsterdam 1969 xi+219 pp

\bibitem{Sp} T.~Spencer,
SUSY statistical mechanics and random band matrices,
Quantum many body systems, 125--177,
Lecture Notes in Math., 2051, Springer, Heidelberg, 2012.

\bibitem{Sp2} T.~Spencer,
Duality, Statistical Mechanics and Random Matrices,
Current Developments in Mathematics, 2012

\bibitem{SV} K.~Splittorff, J.~J.~M.~Verbaarschot,
Triage of the sign problem,
Acta Phys.\ Polon.~B 38 (2007), no.~13, 4123--4137.

\end{thebibliography}
\end{document}